\documentclass[11pt]{amsart}
\pagestyle{plain}
\setcounter{secnumdepth}{5}
\setcounter{tocdepth}{1}

\usepackage[
  left=4cm,
  right=4cm,
  top=3cm,
  bottom=3cm
]{geometry}

\setlength{\footskip}{30pt}

\usepackage{hyperref}
\hypersetup{linktocpage = true, colorlinks = true, linkcolor = blue, citecolor= red, urlcolor = green}

\usepackage[backend=biber, style=numeric, sorting=none, doi=true, url=true, eprint=true]{biblatex}
\addbibresource{bibliography.bib}

\usepackage{tikz}
\usetikzlibrary{positioning, quotes, arrows, decorations.markings, decorations.pathmorphing, decorations.pathreplacing, shapes, patterns, calc}
\usepackage{url}

\usepackage{graphicx, color, bm}
\usepackage{array,tikz,tikz-cd,float}

\usepackage{amsmath,amsthm}
\usepackage{amssymb, amsfonts, verbatim, subfigure}

\usepackage[most]{tcolorbox}
\usepackage[bbgreekl]{mathbbol}
\usepackage{mathpazo}
\usepackage[mathscr,mathcal]{euscript}

\tikzset{middlearrow/.style={
		decoration={markings,
			mark= at position 0.5 with {\arrow{#1}} ,
		},
		postaction={decorate}
	}
}

\newcommand{\id}{\mathbb{1}}

\newcommand{\be}{\begin{equation}}
	\newcommand{\ee}{\end{equation}}
\newcommand{\beqn}{\begin{equation}}
	\newcommand{\eeqn}{\end{equation}}
\newcommand{\bp}{\begin{pmatrix}}
	\newcommand{\ep}{\end{pmatrix}}
\newcommand{\bsp}{\left(\begin{smallmatrix}}
	\newcommand{\esp}{\end{smallmatrix}\right)}

\newcommand{\R}{{\mathbb R}}

\newcommand{\C}{{\mathbb C}}

\newcommand{\tL}{{\tilde L}}
\newcommand{\ta}{{\tilde \alpha}}
\newcommand{\tg}{{\tilde \Gamma}}
\newcommand{\tH}{{\tilde H}}

\newcommand{\dr}{{\mathrm{d}}}

\newcommand{\CA}{{\mathcal A}}
\newcommand{\CB}{{\mathcal B}}

\newcommand{\cF}{{\mathcal F}}

\newcommand{\CJ}{{\mathcal J}}
\newcommand{\CK}{{\mathcal K}}

\newcommand{\CV}{{\mathcal V}}

\newcommand{\fg}{\mathfrak{g}}
\newcommand{\fn}{\mathfrak{n}}

\newcommand{\fh}{\mathfrak{h}}
\newcommand{\fG}{\mathfrak{G}}

\newcommand{\sfx}{\mathsf{x}}
\newcommand{\sfy}{\mathsf{y}}

\numberwithin{equation}{section}
\numberwithin{figure}{section}
\numberwithin{table}{section}

\renewcommand{\d}{\mathrm{d}}

\newcommand{\cA}{\mathcal{A}}


\newtheoremstyle{thm}
{7pt}
{7pt}
{\itshape}
{}
{\bf}
{.}
{5pt}
{\thmnumber{#2 }\thmname{#1}\thmnote{ (#3)}}

\newtheoremstyle{def}
{7pt}
{10pt}
{\itshape}
{}
{\bf}
{.}
{5pt}
{\thmnumber{#2} \thmname{#1}\thmnote{ (#3)}}

\newtheoremstyle{rem}
{4pt}
{10pt}
{}
{}
{\itshape}
{:}
{3pt}
{}

\newtheoremstyle{texttheorem}
{8pt}
{8pt}
{\itshape}
{}
{\bf}
{. \hspace{5pt}}
{3pt}
{}

\theoremstyle{thm}

\newtheorem*{theorem*}{Theorem}
\newtheorem*{lemma*}{Lemma}
\newtheorem*{corollary*}{Corollary}
\newtheorem*{proposition*}{Proposition}
\newtheorem*{definition*}{Definition}

\newtheorem{theorem}{Theorem}[subsection]
\newtheorem{thm-def}{Theorem/Definition}[theorem]
\newtheorem{prop}[theorem]{Proposition}

\newtheorem*{question*}{Question}
\newtheorem{lemma}[theorem]{Lemma}

\numberwithin{equation}{subsection}

\theoremstyle{definition}
\newtheorem{dfn}[theorem]{Definition}

\theoremstyle{rem}

\newtheorem{remark}{Remark}

\usepackage{stmaryrd}
\parskip=10pt
\date{}

\title{Infinite Dimensional Topological-Holomorphic Symmetry in Three-Dimensions}

\author{Hank Chen}
\address{Beijing Institute of Mathematical Sciences and Applications, Beijing 101408, China}
\email{chunhaochen@bimsa.cn} 

\author{Joaquin Liniado}
\address{Instituto de Física La Plata, UNLP and CONICET, C.C. 67, 1900 La Plata, Argentina}
\address{School of Mathematics and Maxwell Institute for Mathematical Sciences, University of Edinburgh, EH9 3FD, United Kingdom}
\email{jliniado@ed.ac.uk}

\begin{document}
	
\begin{abstract}

We introduce a three-dimensional quantum field theory with an infinite-dimensional symmetry, realized explicitly through a centrally extended affine graded Lie algebra. This symmetry is a direct three-dimensional generalization of the chiral symmetry in the Wess-Zumino-Witten model. Upon performing radial quantization, we construct the Fock space of the theory and, via a three-dimensional analogue of the state–operator correspondence, we demonstrate that the algebra of local operators is endowed with the structure of a raviolo vertex algebra. Accordingly, this setup provides a framework for extending the methods of two-dimensional conformal field theory to three dimensions, and we expect it to lay the groundwork for exact methods in three-dimensional quantum field theory.
	
\end{abstract}

\maketitle

\tableofcontents

\section{Introduction}

Two-dimensional conformal field theories have become a foundational framework for both physics and mathematics. The modern formulation of conformal invariance in two dimensions began with the seminal  work of Belavin, Polyakov, and Zamolodchikov \cite{Belavin:1984vu}. Building on the representation theory of the Virasoro algebra, they introduced the concept of an algebra of local operators and used it to construct exactly solvable conformal theories, known as minimal models. This breakthrough sparked intense activity at the intersection of mathematical physics and statistical mechanics, as minimal models were soon identified with a variety of two-dimensional statistical systems at criticality. 

In view of these developments, many efforts were devoted to extending the powerful tools of two-dimensional conformal field theory to higher dimensions. These attempts, however, encountered fundamental obstacles. The most immediate is Liouville’s rigidity theorem, which implies that for $d\geq 3$, all local conformal transformations extend to global ones, and the global conformal group is finite-dimensional. This rules out the existence of local infinite-dimensional conformal symmetry algebras in higher dimensions.

Given these obstructions, a prevailing strategy has been to isolate 2d chiral subalgebras within higher-dimensional theories, where infinite-dimensional symmetries can still be realized. A prominent example is provided by $4$d \(\mathcal{N}=2\) superconformal field theories, where a protected sector of local operators was shown in \cite{Beem:2013sza} to organize into a two-dimensional vertex operator algebra, enabling the use of chiral algebra techniques to obtain exact results. 

The aim of this work is to revisit the pursuit of infinite-dimensional symmetry algebras in three-dimensional quantum field theories, but with a shift in perspective. Rather than looking for a subsector whose dynamics can be captured by a two-dimensional chiral algebra, the strategy is to generalize the very notion of a chiral algebra into an infinite-dimensional symmetry structure suited to describe three-dimensional dynamics. These algebras enjoy a generalized notion of chirality while remaining infinite-dimensional, and thus preserve many of the powerful features typically associated with chiral algebras.

The three-dimensional theories that we will consider are partially topological and partially holomorphic. First introduced in the context of twisted 3d $\mathcal{N}=2$ supersymmetric field theories \cite{Aganagic:2017tvx}, they have since generated considerable interest and have been studied from a variety of mathematical and physical viewpoints. These include, in particular, Poisson vertex algebras \cite{Oh:2019mcg, Zeng:2021zef, Khan:2025rah}, factorization algebras \cite{Wang:2024tjf}, and twistorial field theories \cite{Garner:2023izn}, alongside more general investigations of their structure and properties \cite{Beem:2018fng,Costello:2020ndc, Gwilliam:2021zkv, Garner:2023wrc, Garner:2023zko, Dimofte:2025oqf, Budzik:2025zvu}.
We will focus on a specific example introduced by the authors in \cite{Chen:2024axr}, obtained via a localization procedure applied to holomorphic 2-Chern-Simons theory. 

These theories are defined on three-dimensional manifolds equipped with a transverse holomorphic foliation (THF) \cite{Duchamp:1979, Rawnsley}, meaning they are locally modeled on 
$\C\times \R$. They may be thought of as the mildest possible departure from the purely holomorphic two-dimensional setting: the holomorphic direction is retained, while a single additional real direction is introduced, but only topologically.

In two dimensions, the infinite-dimensional nature of chiral symmetry is rooted in the Laurent expansion of chiral fields in a punctured neighbourhood of a local operator insertion. In the present setting, however, one must understand what replaces this local description. While the complex coordinate continues to carry holomorphic structure, the real direction, being topological, contributes only through a residual notion of relative ordering. The corresponding local model is thus no longer the punctured disk, but its topological-holomorphic analogue, the raviolo, obtained by gluing two disks along a punctured disk (see \S \ref{raviolo} for details).

The natural question is therefore how holomorphicity should be reformulated in this topological-holomorphic setting. In the two-dimensional case, holomorphic functions on the punctured disk are captured by the $\bar\partial$-cohomology:
\begin{equation}
H^{(0,0)}_{\bar\partial}(\mathbb C\setminus\{0\}) \cong \mathbb C[z,z^{-1}] \, .
\end{equation}
This is what underlies the Laurent expansion of chiral fields, and hence the infinite tower of modes underlying chiral symmetry. The problem is then to identify the corresponding cohomology in the topological-holomorphic setting.

In \cite{Garner:2023zqn}, Garner and Williams address precisely this problem by identifying the relevant analogue of holomorphic functions on the raviolo in terms of the cohomology of the differential
\begin{equation}
    \mathrm{d}' = \bar{\partial} + \mathrm{d}_\tau \,,
\end{equation}
where $(z,\bar z,\tau)$ denote local coordinates on $\C\times \R$. They show that the degree-zero and degree-one cohomology of $\mathrm d'$ each give rise to spaces of formal series. In this way, $\mathrm d'$-cohomology provides the appropriate notion of local series expansion in the topological-holomorphic setting, and hence the natural framework in which the modes of $\mathrm d'$-closed symmetry currents may be defined. This is the basic mechanism through which infinite-dimensional symmetry algebras can emerge in three dimensions.

The general picture described above is realized concretely in the three dimensional field theory introduced in \cite{Chen:2024axr}, which will be the main object of study in this paper. The symmetry currents of that theory satisfy, schematically, a differential constraint of the form 
$\d'J=0$, and it was already anticipated there that they should generate an infinite-dimensional symmetry algebra. Since this is precisely the condition for the symmetry currents to define 
$\d'$-cohomology classes, the cohomological picture developed in \cite{Garner:2023zqn} provides the natural setting in which their mode algebra can be made explicit.

Concretely, the condition 
$\d'J=0$ allows one to expand the symmetry currents into formal series whose coefficients define an infinite collection of conserved charges, in direct analogy with the mode expansion of chiral currents in two-dimensions. In this context, one can also perform radial quantisation, thereby allowing the computation of commutators between the modes of the symmetry currents. The commutation relations are then shown to correspond to a centrally extended affine graded Lie algebra, which can be understood as a three-dimensional counterpart of the Kac–Moody algebra.

The existence of this algebra makes it possible to construct the Fock space of the theory by consistently selecting the creation operators among the modes of the symmetry currents. In the language of representation theory, this Fock space corresponds to the vacuum module of the algebra. As in the two-dimensional case, one can show that the states of this vacuum module are in one-to-one correspondence with local operators of the three-dimensional theory --- the so-called \textit{state-operator correspondence}.
Altogether, these ingredients define a raviolo vertex algebra in the sense of \cite{Garner:2023zqn}, which can be regarded as a three-dimensional generalisation of an ordinary vertex algebra.

The theory studied in this paper should be viewed as a concrete and explicit realization of a three-dimensional topological-holomorphic quantum field theory whose symmetry currents are organized by a raviolo vertex algebra. In close analogy with two-dimensional conformal field theories, where an infinite-dimensional chiral symmetry provides a powerful organizing principle and, in favourable cases, leads to exact results, the presence of a raviolo vertex algebra suggests that similar mechanisms may also be available in three dimensions. While this perspective is still at an early stage, the present example provides a hands-on setting in which these ideas can be developed concretely.

\subsection*{Outline}

The structure of the paper is as follows.
In Section \ref{sec:3dtheory}, we introduce the three-dimensional theory constructed in \cite{Chen:2024axr}, and describe its symmetry structure. In Section \ref{raviolo}, we present the raviolo formalism developed in \cite{Garner:2023zqn} and adapt it to the three-dimensional setting relevant to our theory. In Section \ref{affineraviolo}, we perform the radial quantisation of the three-dimensional theory and compute the commutation relations between the modes of the symmetry currents. In Section \ref{sec:ceagla}, we show that these commutation relations define a centrally extended affine graded Lie algebra, which we construct in detail. Finally, in Section \ref{sec:vertexalgebra} we construct the raviolo vertex algebra corresponding to the three-dimensional theory. 

\section{3d Topological-Holomorphic Theory}

\label{sec:3dtheory}

In this section, we introduce the main features of the three-dimensional theory constructed in \cite{Chen:2024axr}. Our goal is to show that the raviolo formalism introduced in \cite{Garner:2023zqn} is not merely a convenient tool, but in fact the natural framework for understanding the infinite dimensional symmetry structure of this theory.


Let $\fG=(\fh\xrightarrow{\mu_1}\fg,\mu_2)$ denote a (real) Lie 2-algebra and let $\mathbb{G}=(\mathsf{H}\xrightarrow{\mathsf{t}}G,\rhd)$ the corresponding Lie 2-group. A brief review of Lie 2-algebras and Lie 2-groups is provided in Appendix~\ref{sec:appLie2alg}. We consider the three-dimensional manifold $M=\mathbb{C}\times \R$. The three-dimensional action is given by \cite{Chen:2024axr}
\begin{multline}
       S[g,\Theta] = \int_M \operatorname{vol}_M \Big(\langle \partial_z (\partial_{\bar z}gg^{-1}), g\rhd\Theta_\tau\rangle - \langle \partial_z (\partial_\tau gg^{-1}),g\rhd\Theta_{\bar z}\rangle \\
       +\langle \mu_1\big(\partial_z (g\rhd\Theta_{\bar z})\big),g\rhd\Theta_\tau\rangle\Big)\,,\label{3daction-coords} 
\end{multline}
where $g\in C^{\infty}(M,G)$ is a $G$-valued smooth field, $\Theta \in \Omega^{1}(M,\fh)$ an $\fh$-valued $1$-form, and $\langle\cdot,\cdot\rangle:\fg \times \fh \to \mathbb{C}$ a degree $+1$, non degenerate, invariant bilinear form.


The appearance of the 1-form field \(\Theta\), not as a gauge connection but rather as a matter field playing a role analogous to that of \(g\), is closely tied to the higher categorical nature of the action. This reflects the categorical ladder = dimensional ladder principle \cite{Crane:1994ty, Baez:1995xq}, which underlies the construction of the three-dimensional action \eqref{3daction-coords}. As we shall see, the presence of this higher-form field, together with the graded structure provided by the Lie 2-algebra, integrates seamlessly with the raviolo formalism developed in \cite{Garner:2023zqn}.

The action \eqref{3daction-coords} is invariant under both left- and right-acting \emph{semi-local}\footnote{By semi-local we mean that these transformations are not gauge symmetries, but neither are they fully global. They exhibit a restricted coordinate dependence, subject to specific constraints.} symmetries. A direct computation shows that it is invariant under the transformations
\begin{equation}
\label{ec:3dleftsym}
    g \mapsto g(1 + \tilde\alpha) \,, \quad \Theta \mapsto \Theta-\tilde\alpha \rhd \Theta + \tilde\Gamma \,,
\end{equation}
as long as the infinitesimal parameters \(\tilde\alpha \in C^{\infty}(M, \fg)\) and \(\tilde\Gamma \in \Omega^{1}(M, \fh)\) satisfy the constraints
\begin{equation}
\label{ec:3drightsymconst}
    \partial_{\bar z} \tilde\alpha + \mu_1(\tilde\Gamma_{\bar z}) = 0\,, \quad 
    \partial_{\tau} \tilde\alpha + \mu_1(\tilde\Gamma_{\tau}) = 0\,, \quad 
    \partial_\tau \tilde\Gamma_{\bar z} - \partial_{\bar z} \tilde\Gamma_\tau = 0\,.
\end{equation}
These define the right-acting symmetries of the model. The left-acting symmetries take the form
\begin{equation}
\label{ec:3drightsym}
    g \mapsto (1 - \alpha)g \,, \quad \Theta \mapsto \Theta - g^{-1}(1 + \alpha) \rhd \Gamma \,,
\end{equation}
where the infinitesimal parameters \(\alpha \in C^{\infty}(M, \fg)\) and \(\Gamma \in \Omega^{1}(M, \fh)\) must satisfy
\begin{equation}
\label{ec:3dleftsymconst}
    \partial_z \alpha = 0\,, \quad \partial_z \Gamma_\tau = 0\,, \quad \partial_z \Gamma_{\bar z} = 0\,.
\end{equation}
Let us pause to highlight a few important observations. First, just like the fields in the theory, each left-/right-acting symmetry transformation parameter comes in a pair, once again reflecting the underlying categorical structure. Second, while the left-acting symmetry pair is anti-holomorphic, the right-acting symmetry satisfies a more intricate condition: it involves derivatives with respect to $\bar z$ and $\tau$, along with a non-trivial appearance of the differential $\mu_1$. 


The symmetry transformations \eqref{ec:3dleftsym} and \eqref{ec:3drightsym} 
give rise to conserved currents. Consistent with the higher--categorical 
structure of the theory, these currents again naturally appear in pairs. They can be 
derived using Noether’s trick; for the right-acting symmetry, one considers a variation of the action under $g \mapsto g(1 + \tilde\alpha) $ and $\Theta \mapsto \Theta - \tilde\alpha \rhd \Theta + \tilde\Gamma$ with arbitrary infinitesimal parameters \( \tilde\alpha \in C^\infty(M, \fg) \) and \( \tilde\Gamma \in \Omega^1(M, \fh) \). Requiring the action to be stationary under such variations, i.e., \( \delta S = 0 \), leads to 
\begin{equation}
\label{ec:conservationeq1}
     \partial_{\bar z}\tilde L_z -\mu_1\tilde H_{z{\bar z}}=0,\qquad \partial_{\tau}\tilde L
     _z-\mu_1\tilde H_{z\tau}=0,\qquad \partial_\tau H_{z\bar z} + \partial_{\bar z}H_{z\tau }=0\,.
\end{equation}
where we have introduced the currents
\begin{equation}
    \tilde L_z = g^{-1}\partial_z g \,,\quad \tilde H_{z{\bar z}} = \partial_z\Theta_{\bar z} + \mu_2(\tL_z,\Theta_{\bar z}),\quad \tilde H_{z\tau} =\partial_z\Theta_\tau+\mu_2(\tL_z,\Theta_\tau) \,.
\end{equation}
In a completely analogous manner, we can compute the currents associated to the left symmetry, given by
\begin{equation}
    \begin{aligned}
      &L_{\bar z} = -\partial_{\bar z}gg^{-1}-\mu_1(g\rhd \Theta_{\bar z})\,,\quad L_\tau = -\partial_\tau gg^{-1}-\mu_1(g\rhd \Theta_\tau)
      \,,\\
      &\qquad H_{{\bar z}\tau} = g\rhd (\partial_{\bar z}\Theta_\tau-\partial_\tau\Theta_{\bar z}-[\Theta_{\bar z},\Theta_\tau]) \,,
\end{aligned}
\end{equation}
which satisfy the conservation equations
\begin{equation}
\label{ec:conservationeq2}
    \partial_z L_{\bar z} = 0\,,\quad \partial_zL_\tau=0\,,\quad \partial_zH_{\bar z\tau} = 0\,.
\end{equation}

\subsection{Higher-currents as differential forms}
\label{sec:highcurrasdifforms}
The expressions we have obtained suggest a natural splitting of coordinates, treating \((\bar z, \tau)\) and \(z\) separately. More precisely, we can introduce the differential
\begin{equation}
    \mathrm{d}' = \bar{\partial} + \mathrm{d}_\tau\,,
\end{equation}
and decompose differential forms into components along \(\mathrm{d}'\) and along \(\partial\). With this in mind, the currents associated with the right-acting symmetry can be written as
\begin{equation}
\label{ec:currents1}
    \tilde{L} = g^{-1} \partial g\,, \quad \tilde{H} = \partial \Theta + \mu_2(g^{-1} \partial g, \Theta)\,,
\end{equation}
and those associated with the left-acting symmetry as
\begin{equation}
\label{ec:currents2}
    L = -\mathrm{d}'g\, g^{-1} - \mu_1(\Theta)\,, \quad 
    H = g \rhd \left( \mathrm{d}'\Theta - \tfrac{1}{2}[\Theta, \Theta] \right)\,.
\end{equation}
In terms of this decomposition, the conservation equation \eqref{ec:conservationeq1} takes the form
\begin{equation}
\label{ec:conseqdifform}
    \mathrm{d}'\tilde{L} - \mu_1(\tilde{H}) = 0\,, \quad \mathrm{d}'\tilde{H} = 0\,,
\end{equation}
and the differential constraints \eqref{ec:3dleftsymconst} satisfied by the symmetry parameters can be written as
\begin{equation}
    \dr' \tilde\alpha + \mu_1 (\tilde \Gamma) =0\,,\quad \dr'\tilde \Gamma = 0\,.
\end{equation}
Similarly, the conservation equation \eqref{ec:conservationeq2} can be written using the remaining differential as 
\begin{equation}
    \partial L = 0\,, \quad \partial H = 0\,,
\end{equation}
and the differential constraints \eqref{ec:3dleftsymconst} satisfied by the symmetry transformation parameters take the form
\begin{equation}
    \partial\alpha = 0 \,,\quad \partial  \Gamma = 0\,.
\end{equation}

\begin{remark}
    Let us note that this structure runs in direct parallel with the two dimensional WZW model. In that case, the theory admits independent holomorphic and anti-holomorphic symmetries which, using the Dolbeault decomposition $\dr = \partial + \bar\partial$, are characterized by parameters satisfying
\begin{equation}
    \bar\partial \tilde\alpha = 0 \,, \qquad \partial \alpha = 0\,.
\end{equation}
The associated currents are then defined by
\begin{equation}
    J = k g^{-1}\partial g \,, \qquad \bar J = k \bar\partial g\, g^{-1}\,,
\end{equation}
and satisfy the corresponding differential constraints
\begin{equation}
\label{ec:holocondition}
    \bar\partial J = 0 \,, \qquad \partial \bar J = 0\,.
\end{equation}
The equations satisfied by $\ta(z)$ and $J(z)$ imply that both define cohomology classes, and hence admit Laurent expansions. Indeed, the condition $\bar \partial \tilde \alpha = 0$ implies that\footnote{See Appendix~\ref{sec:cohomologyconventions} for our conventions on cohomological notation.}
\begin{equation}
    \tilde \alpha \in H^{(0,0)}_{\bar \partial}((\mathbb{C}\setminus\{0\})\otimes \fg) = \C[z,z^{-1}]\otimes \mathfrak{g}\, \quad \Longrightarrow \quad \tilde \alpha = \sum_{n\in \mathbb{Z}}\tilde \alpha_n z^n\,.
\end{equation}
Furthermore, using the linear isomorphism\footnote{We will make frequent use of this linear isomorphism. In two dimensions it 
amounts to writing $J = kg^{-1}\partial g$ as a one–form rather than as a 
$\fg$–valued function, a purely notational choice that eases the transition to 
the three–dimensional setting. Nevertheless, even in 3d it will at times be convenient 
to factor out the differential and regard these objects explicitly as 
either functions or forms.}
\begin{equation}
    \dr z \wedge -:H^{(0,0)}_{\bar \partial}((\mathbb{C}\setminus\{0\})\otimes \fg) \to H^{(1,0)}_{\bar \partial}((\mathbb{C}\setminus\{0\})\otimes \fg)\,,
\end{equation}
one can also write a Laurent expansion for the current $J$, since $\bar \partial J = 0$ implies 
\begin{equation}
    J \in H^{(1,0)}_{\bar \partial}((\mathbb{C}\setminus\{0\})\otimes \fg) \,.
\end{equation}
In other words, the mode expansions of the current and symmetry parameters, can be understood as arising from their realization as nontrivial Dolbeault cohomology classes.
\end{remark}

Turning back to our theory, our goal is to identify the infinite-dimensional symmetry algebra associated with the conserved currents 
$(\tilde L,\tilde H)$, $(L,H)$. For this, two ingredients are needed: mode expansions, which produce the infinitely many generators, and the short-distance behaviour of products of local operators, which determines the algebraic structure satisfied by these generators. Based on the preceding argument, a natural strategy is to work directly at the level of cohomology. In the present setting, this leads to the study of the complex \(\Omega^\bullet(\mathbb C\times\R\setminus\{0\})\) with respect to the splitting defined by \(\partial\) and \(\d'=\bar\partial+\d_\tau\). The natural framework for this is precisely the raviolo formalism of \cite{Garner:2023zqn}, to which we turn next.

\section{The Raviolo}\label{raviolo}

The cohomological viewpoint introduced above requires an appropriate local geometric model. In the two-dimensional holomorphic setting, this role is played by the punctured disk. On the one hand, the local behaviour of a holomorphic field near an insertion point is described on a punctured neighbourhood of that point, and holomorphic functions on such a neighbourhood admit Laurent expansions. On the other hand, for two local operators inserted at \(z_1\) and \(z_2\), translation invariance reduces their relative dependence to the coordinate \(z=z_1-z_2\in\mathbb C^\times\), so that near coincidence one is again led to a punctured neighbourhood of \(z=0\). In this sense, the punctured disk provides the common local geometry underlying both mode expansions and the short-distance behaviour of operator products, while the corresponding \(\bar\partial\)-cohomology captures the holomorphic data on this space.

In the topological-holomorphic setting on \(\mathbb C\times \mathbb R\), the analogous discussion leads to a different local geometric object. Consider two local operators inserted at points \((z_1,\tau_1)\) and \((z_2,\tau_2)\). After quotienting by overall translations, their relative position is described by $(z,\tau)=(z_1-z_2,\tau_1-\tau_2)$, with \((z,\tau)\neq(0,0)\). Thus the relevant local geometry is obtained from the space of nonzero relative separations in \(\mathbb C\times\mathbb R\). 

The crucial point, however, is that the two directions play different roles. In the holomorphic direction, one retains the full local dependence on the complex coordinate \(z\). By contrast, the real direction is topological, so configurations that differ only by a continuous deformation in the \(\tau\)-direction are identified, provided the two insertions remain distinct throughout the deformation. Thus, for each fixed \(z\neq 0\), all values of \(\tau\) are equivalent, since one may vary \(\tau\) freely without ever encountering the coincidence locus \((z,\tau)=(0,0)\).

The situation changes at \(z=0\). In that case, the configurations with \(\tau>0\) and \(\tau<0\) cannot be deformed into one another while keeping the insertions distinct, since any continuous path between them would necessarily pass through the coincidence point \((z,\tau)=(0,0)\). Thus the topological direction does not disappear completely: at \(z=0\) it retains a residual piece of ordering information, namely whether one insertion lies above or below the other. The resulting local model is therefore obtained by taking two copies of the disk, corresponding to these two orderings, and gluing them together away from the origin along the punctured disk. This is precisely the raviolo
\begin{equation}
    \mathbb{R}\mathrm{av} = D\cup_{D^\times}D\,.
\end{equation}
In the two-dimensional holomorphic case, the punctured disk provides the local model for \(\mathbb{C}^\times\), and \(\bar\partial\)-cohomology captures the holomorphic functions on it. In the topological-holomorphic case, the analogous local model is the raviolo, which plays the corresponding role for \(\mathbb{C}\times\mathbb{R}\setminus\{0\}\). The $\d'$-cohomology thus provides the topological-holomorphic analogue of holomorphic functions. 

The ideas sketched above are formalized in the raviolo framework of \cite{Garner:2023zqn}. We begin by reviewing the ingredients of this construction that will be needed in what follows.

\subsection{The mixed de Rham-Dolbeault complexes}
Let $M=\C\times \R$
with local coordinates 
$(z, \bar z, \tau)$. Since our discussion is entirely local, this model will be sufficient for our purposes. Although the formalism extends more generally to three-manifolds admitting a transverse holomorphic foliation, we will restrict to this case throughout.

In order to make the splitting of the total differential into $\d'$ and $\partial$ manifest, we introduce the bigraded complex 
\begin{equation}
    \mathcal{A}^{p,q}= C^{\infty}(M)\otimes \wedge^p (\C \d \tau \oplus \C \d \bar z)\otimes \wedge^q (\C \d z) \,.
\end{equation}
This complex will play an important role, so let us provide explicit examples of its elements for the lowest degrees. The space \(\mathcal{A}^{0,0}\) consists of smooth functions on $M$. An element of \(\mathcal{A}^{1,0}\) is of the form
\begin{equation}
    L_{\bar z}(z,\tau)\, \mathrm{d}\bar z + L_{\tau}(z,\tau)\, \mathrm{d}\tau\,.
\end{equation}
At degree 2, elements of \(\mathcal{A}^{1,1}\) are given by
\begin{equation}
    \tilde{H}_{z\bar z}(z,\tau)\, \mathrm{d}z \wedge \mathrm{d}\bar z + \tilde{H}_{z\tau}(z,\tau)\, \mathrm{d}z \wedge \mathrm{d}\tau\,,
\end{equation}
while elements of \(\mathcal{A}^{2,0}\) look like
\begin{equation}
    H_{\tau \bar z}(z,\tau)\, \mathrm{d}\tau \wedge \mathrm{d}\bar z\,.
\end{equation}
In particular, we have that
\begin{equation}
    \d':\cA^{p,q} \to \cA^{p+1,q} \,,\qquad \partial:\cA^{p,q} \to \cA^{p,q+1} \,.
\end{equation}

\medskip

\begin{remark}
\label{split}
Note that for \(q=0,1\), the complex \((\mathcal A^{\bullet,q},\d')\) may be identified
\begin{equation}
\label{ec:bicomplexA}
(\mathcal A^{\bullet,q},\d')
\cong
\bigl(\Omega^{(q,\bullet)}_{\mathrm{Dol}}(\mathbb C)\otimes \Omega^\bullet_{\mathrm{dR}}(\mathbb R),\ \bar\partial\otimes \mathbb 1+\mathbb 1\otimes \d_\tau\bigr).
\end{equation}
Likewise, one may consider the different complex
\begin{equation}
\label{ec:bicomplexB}
(\mathcal B^{\bullet,q},\partial+\bar\partial)
\cong
\bigl(\Omega^{(\bullet,\bullet)}_{\mathrm{Dol}}(\mathbb C)\otimes \Omega^q_{\mathrm{dR}}(\mathbb R),\ (\partial+\bar\partial)\otimes \mathbb 1\bigr),
\end{equation}
again for \(q=0,1\). In other words, the de Rham differential on $M$ admits two natural decompositions,
\begin{equation}
\d=\d'+\partial,
\qquad
\d=(\partial+\bar\partial)+\d_\tau.
\end{equation}
The complex \(\mathcal A\) is adapted to the first decomposition, whereas \(\mathcal B\) is adapted to the second. Although \(\mathcal A^{\bullet,q}\) and \(\mathcal B^{\bullet,q}\) are not quasi-isomorphic for fixed \(q\), one recovers the full complexified de Rham complex once the missing differentials are reinstated:
\begin{equation}
\left(\bigoplus_{q=0,1}(\mathcal A^{\bullet,q},\d'),\partial\right)
\cong
\Omega^\bullet_{\mathbb C}(M)
\cong
\left(\bigoplus_{q=0,1}(\mathcal B^{\bullet,q},\partial+\bar\partial),\d_\tau\right).
\end{equation}
This gives a precise notion of the \textit{chirality vector} $\ell$ introduced in \cite{Chen:2024axr}: it specifies which differential is ``missing". 
    
\end{remark}

\subsubsection{Tensoring with $\fG$}

An important structural feature of our three-dimensional theory \eqref{3daction-coords} is that the currents $(\tL,\tH)$ and $(L,H)$  defined in \eqref{ec:currents1} and \eqref{ec:currents2}, respectively, are valued in the Lie 2-algebra \(\mathfrak G=(\mathfrak g\xrightarrow{\mu_1}\mathfrak h,\mu_2)\). Because \(\mathfrak G\) comes equipped with its own differential \(\mu_1\), it must be regarded as a chain complex. Hence, to describe the objects appearing in our theory, we are naturally led to consider the tensor product complex in which the geometric differential \(\d'\) is combined with the internal differential \(\mu_1\), namely
\begin{equation}
    \begin{tikzcd}
  \cdots \arrow[r] & \cA^{0,q} \otimes \fh \arrow[r, "\dr'"] \arrow[d, "\mu_1"] & \cA^{1,q} \otimes \fh \arrow[r] \arrow[d, "\mu_1"] &\cdots \\
  \cdots \arrow[r] & \cA^{0,q} \otimes \fg\arrow[r, "\dr'"] & \cA^{1,q} \otimes \fg  \arrow[r] & \cdots
\end{tikzcd}
\end{equation}
for $q=0,1$. As a tensor product complex $\cA^{\bullet,q}\otimes \fG$ is equipped with the differential
\begin{equation}
    \hat \dr' = \dr'\otimes \mathbb{1} - (-1)^{\text{deg}} \mathbb{1}\otimes \mu_1\,,
\end{equation}
where ``$\text{deg}$'' denotes the homogeneous degree of the first entry. More explicitly, for each $\alpha\otimes\mathsf{x}\in \cA^{\bullet,q}\otimes \fG$ we have 
\begin{equation}
    \hat \dr'(\alpha\otimes \mathsf{x}) = \dr'\alpha\otimes \mathsf{x} - (-1)^{|\alpha|} \alpha\otimes \mu_1(\mathsf{x}) \,,
\end{equation}
where $|\alpha|$ is the homogeneous degree of $\alpha\in\cA^{\bullet,q}$.

\subsection{Revisiting the Three-Dimensional Theory}

Let us now apply the framework developed above to reformulate our 3d topological-holomorphic field theory \(S[g,\Theta] \) in terms of the tensor product complex \( \mathcal{A}^{p,q} \otimes \fG \).

We begin by noting that, since the component $\Theta_z$ is missing from the action \eqref{3daction-coords}, we can view the fundamental fields of $S[g,\Theta]$ as elements
\begin{equation*}
    g\in \cA^{0,0}\otimes G,\quad \Theta\in \cA^{1,0}\otimes \fh\,.
\end{equation*}
This then allows us to write $S[g,\Theta]$ in  a very compact form
\begin{equation}
  S[g,\Theta]= \int_{M}\langle \partial (\dr'gg^{-1}),g\rhd \Theta\rangle - \tfrac{1}{2}\langle \mu_1(\partial (g\rhd\Theta)),g\rhd\Theta\rangle\,.\label{3daction}
\end{equation}
The symmetry transformation parameters of \eqref{ec:3dleftsym} and \eqref{ec:3drightsym} can also be identified with elements of the complex as 
\begin{equation}
    \alpha,\tilde \alpha \in \cA^{0,0}\otimes \fg \,,\quad \Gamma,\tilde\Gamma \in \cA^{1,0}\otimes \fh \,,
\end{equation}
as well as the currents \eqref{ec:currents1} and \eqref{ec:currents2} which can be  written as
\begin{align}
        & L \in \mathcal{A}^{1,0}\otimes \fg\,,\quad H\in\mathcal{A}^{2,0}\otimes \fh\\
        & \tilde L \in\mathcal{A}^{0,1}\otimes \fg\,,\quad \tilde H \in\mathcal{A}^{1,1}\otimes \fh\,.
\end{align}
The conservation equations for the currents \((\tilde{L}, \tilde{H})\) given in \eqref{ec:conseqdifform}, associated with the right symmetry of \eqref{3daction}, are then given by 
\begin{equation}
\label{ec:currentsincoho}
    \hat\dr' (\tilde L, \tilde H) = (\dr'\tilde L - \mu_1 (\tilde H)
    , \dr' \tilde H) = 0 \,,
\end{equation}
so that $(\tilde L,\tilde H)$ is a representative of a cohomology class in $ H_{\hat \dr'}^{(\bullet,1)}(M\otimes \fG)$\footnote{We refer the reader to appendix \ref{sec:cohomologyconventions} for our conventions on cohomological notation.}. Note that $\mu_1$ does not act on the second factor $\tilde H \in \cA^{1,1}\otimes \fh$ as it is already valued in $\fh$. Similarly, the differential constraints \eqref{ec:3drightsymconst} satisfied by the symmetry transformation parameters corresponding to the right symmetry can be written as
\begin{equation}
\label{ec:paramincoho}
    \hat \dr' (\tilde \alpha,\tilde \Gamma) = (\dr'\tilde \alpha + \mu_1(\tilde \Gamma),\dr' \tilde \Gamma)=0 \,,
\end{equation}
so that $(\tilde \alpha, \tilde \Gamma)$ is a representative of a cohomology class in $H^{(\bullet,0)}_{\hat \dr'}(M\otimes \fG)$. 

Conversely, the conservation equations for the currents $(L,H)$ given in \eqref{ec:3dleftsym} associated to the left symmetry of \eqref{3daction} are given by 
\begin{equation}
    \partial (L,H) = (\partial L, \partial H)=0 \,,
\end{equation}
so that $(L,H)$ is a representative of a cohomology class in $H^{(\bullet,1)}_\partial(M\otimes \fG)$, whereas the differential constraints \eqref{ec:3dleftsymconst} take the form
\begin{equation}
    \partial (\alpha,\Gamma)=(\partial \alpha,\partial\Gamma)=0 \,,
\end{equation}
so that $(\alpha,\Gamma)$ is a representative of a cohomology class in $H^{(\bullet,0)}_\partial(M\otimes \fG)$. 
It thus follows that the higher-currents of our 3d theory admit a natural decomposition, captured by the mixed de Rham--Dolbeault complex \(\mathcal{A} \otimes \mathfrak{G}\).  We shall call $H_{\hat \dr'}^{(\bullet,\bullet)}$ the \textbf{chiral sector}, and $H_{\partial}^{(\bullet,\bullet)}$ the \textbf{anti-chiral sector} so that $(\tilde L,\tilde H)$ are chiral  whereas $(L,H)$ are anti-chiral.

It is important to note that in this case, the chiral and anti-chiral sectors are not ``symmetric'' to one another in an obvious way. In particular, as we will see, only the chiral sector will give rise to the infinite dimensional symmetry algebra. Consequently, we will focus on the cohomology group
\begin{equation}
    H_{\hat{\mathrm{d}}'}^{(\bullet,q)}(M \otimes \mathfrak{G})\,,
\end{equation}
for $q=0,1$, with the objective of finding a mode expansion for both our higher current $(\tilde L, \tilde H)$ and the gauge transformation parameters $(\tilde \alpha,\tilde \Gamma)$. 



\begin{remark}\label{nonchiral}

In \cite{Chen:2024axr,Chen:2023integrable}, a non-chiral version of the three-dimensional field theory was also considered. This variant is formulated using the decomposition of the de~Rham differential associated with the bicomplex \(\CB^{\bullet,\bullet}\) defined in \eqref{ec:bicomplexB}, given by the splitting
\begin{equation}
        \mathrm{d} = (\partial + \bar{\partial}) + \mathrm{d}_\tau\,.
\end{equation}
The fields consist of a group-valued field \(g \in \CB^{0,0} \otimes G\) and a one-form field \(\Theta \in \CB^{1,0} \otimes \mathfrak{h}\). The corresponding action is given by
\begin{equation}
\label{ec:nonchiralaction}
      S_{\text{nc}}[g,\Theta] = \int_M \left\langle \mathrm{d}_\tau(\mathrm{d}_\Sigma g g^{-1}), g \rhd \Theta \right\rangle 
    - \tfrac{1}{2} \left\langle \mu_1\big(\mathrm{d}_\tau(g \rhd \Theta)\big), g \rhd \Theta \right\rangle\,, 
\end{equation}
where we have written $\dr_\Sigma = \partial + \bar \partial$ for short. This theory also features left- and right-acting symmetries, with associated conserved currents \((L_{\text{nc}}, H_{\text{nc}})\) and \((\tilde{L}_{\text{nc}}, \tilde{H}_{\text{nc}})\), which satisfy explicitly \textit{non-chiral} conservation equations (see equation (6.3) in~\cite{Chen:2023integrable}):
\begin{equation*}
    \dr_\tau(L_{\text{nc}}, H_{\text{nc}}) = 0\,, \qquad (\dr_\Sigma - \mu_1)(\tilde{L}_{\text{nc}}, \tilde{H}_{\text{nc}}) = 0\,.
\end{equation*}
This highlights a fundamental difference between the non-chiral theory \(S_{\mathrm{nc}}\) and its chiral counterpart: their dynamics are fundamentally different --- indeed, \(S_{\mathrm{nc}}\) is fully topological (see \textbf{Theorem}~6.2 in~\cite{Chen:2024axr}).

\end{remark}

\subsection{Mode expansion of the currents}\label{ravpolys}

We have seen that the differential constraints for $(\tilde \alpha,\tilde\Gamma)$ and the conservation equations for \((\tilde{L}, \tilde{H})\) are related to the cohomology group
\begin{equation}
    H_{\hat{\mathrm{d}}'}^{(\bullet,q)}(M \otimes \mathfrak{G})\,,
\end{equation}
with $q=0,1$. Our next goal is to obtain an explicit expression for this cohomology group.
We will proceed in two steps: first, we will describe 
\(H^{(\bullet,q)}_{\mathrm{d}'}(M)\) following \cite{Garner:2023zqn}, 
and then we will apply the Künneth formula to compute the cohomology of 
the tensor product complex. 

Similar to the two-dimensional pure Dolbeault case, there is a linear isomorphism 
\begin{equation}
\label{ec:lineariso}
    \dr z \wedge -: H^{(\bullet,0)}_{\mathrm{d}'}(M) \to H^{(\bullet,1)}_{\mathrm{d}'}(M) \,.
\end{equation}
Hence, it suffices to find an explicit realization of \( H^{(\bullet,0)}_{\mathrm{d}'}(M) \), since elements in \( H^{(\bullet,1)}_{\mathrm{d}'}(M) \) can be obtained by wedging elements of the former with \(\mathrm{d} z\). The cohomology of \((\mathcal{A}^{\bullet,0}, \mathrm{d}')\) was completely characterized in \cite{Garner:2023zqn} and is concentrated in degrees 
zero and one. The zeroth cohomology consists of polynomials in the holomorphic 
variable \(z\), that is,
\begin{equation}
    H^{(0,0)}_{\mathrm{d}'}(M) = \mathbb{C}[z]\,.
\end{equation}
Unlike in the Dolbeault complex, no negative powers of \(z\) appear here; this is a manifestation of Hartogs's theorem (see also \cite{FAONTE2019389}). 
Consequently, any element \(\tilde{\alpha} \in H^{(0,0)}_{\mathrm{d}'}(M)\) can be 
expanded in modes as
\begin{equation}
    \tilde{\alpha} = \sum_{n=0}^\infty \tilde{\alpha}_n z^n\,.
\end{equation}
The degree-one cohomology \(H^{(1,0)}_{\mathrm{d}'}(M)\) is more subtle and 
requires introducing additional structures. Since the negative powers of 
\(z\) that appear in the holomorphic setting are absent from 
\(H^{(0,0)}_{\mathrm{d}'}(M)\), the idea is that in this context, 
the role of \(z^{-1}\) is played by a different object with analogous properties, 
which appears in \(H^{(1,0)}_{\mathrm{d}'}(M)\). Concretely, given a point $\underline{z} = (z,\bar z,\tau) \in  \C \times \R$, this is given 
by the \((1,0)\)-form
\begin{equation}
\label{ec:omega1}
    \omega(\underline{z}) = \frac{\tau \, \dr \bar{z} - 2 \bar{z} \, \dr \tau}{(|z|^2 + \tau^2)^3} 
    \in \C\d\tau\oplus\C\d\bar z\,.
\end{equation}
It is straightforward to verify that $\dr' \omega = 0$; however, $\omega$ is
not $\dr'$-exact and thus defines a non-trivial cohomology class in \(H^{(1,0)}_{\mathrm{d}'}(M)\). By contrast,
$\partial_\tau \omega(\underline{z})$ and $\partial_{\bar z} \omega(\underline{z})$
are $\dr'$-exact, and therefore trivial in cohomology (see Lemma~1.1.3
in~\cite{Garner:2023zqn}).

The key property 
of \(\omega\) is an analogue of Cauchy's residue theorem: its integral 
over a two-sphere centered at the origin is given by
\begin{equation}
    \oint_{S^2}\dr z \wedge \omega = 8\pi i\,.
\end{equation}
Moreover, just as \( z^{-1} \) is the Green's function for \(\partial\) on 
\(\mathbb{C} \setminus \{0\}\), the \((1,0)\)-form \(\omega\) serves as the 
Green's function for \(\dr'\) on \(\C \times \R  \setminus \{0\}\). 
Consequently, they both act as propagators in field theories with the corresponding kinetic operators.

We can now use \(\omega\) to give an explicit description of 
\(H^{(1,0)}_{\mathrm{d}'}(M)\). First, we introduce the so-called degree-one 
\emph{raviolo differential forms}, which are obtained by taking derivatives of 
\(\omega\) with respect to \(z\). Specifically, we define
\begin{equation}
\label{ec:Omegadef}
    \Omega^m = \frac{(-1)^m}{m!} \partial_z^m \omega\,,\qquad \Omega^0=\omega\,.
\end{equation}
Each \(\Omega^m\) is a one-form, and they obey the relations
\begin{equation}
\label{ec:zntimesomegam}
    z^n \Omega^m = 
    \begin{cases}
        0 & \text{for } n > m \\
        \Omega^{m-n} & \text{for } n \leq m
    \end{cases}\,,
\end{equation}
together with the holomorphic derivative identity 
\begin{equation}
\label{ec:holoderivative} 
    \partial_z \Omega^m = -(m+1)\Omega^{m+1}\,.
\end{equation}
In other words, the raviolo differential forms serve a role analogous to that of negative modes in $\C \times \R\setminus\{0\}$, but in the form of \textit{1-forms} with non-trivial degree. 

The degree-one cohomology is then given by \cite{Garner:2023zqn}
\begin{equation}
    H^{(1,0)}_{\mathrm{d}'}(M) = \mathrm{span}_{\mathbb{C}} \{ \Omega^0, \Omega^1, \Omega^2, \dots \}\,,
\end{equation}
so that any element \(\tilde{\Gamma} \in H^{(1,0)}_{\mathrm{d}'}(M)\) can be 
expanded in modes as
\begin{equation}
    \tilde{\Gamma} = \sum_{m=0}^\infty \tilde{\Gamma}_m \Omega^m\,.
\end{equation}
We will denote the full cohomology by
\begin{equation}
\label{ec:ravpoly}
    \mathcal{K}^\bullet_{\text{poly}} := H^{(\bullet,0)}_{\mathrm{d}'}(M)\,,
\end{equation}
and refer to it as the space of polynomials on the raviolo. This space is the 
analogue of the Laurent polynomials \(\mathbb{C}[z, z^{-1}]\) on 
\(\mathbb{C}^\times\) and provides the appropriate algebraic structure for 
constructing the mode expansions, ultimately leading to the centrally extended affine graded Lie algebra. 

Having found an explicit realization for \( H^{(\bullet,0)}_{\mathrm{d}'}(M) \), we can use the linear isomorphism in \eqref{ec:lineariso} to construct elements in \( H^{(\bullet,1)}_{\mathrm{d}'}(M) \). The existence of this isomorphism implies that any element \( \tilde{L} \in H^{(0,1)}_{\mathrm{d}'}(M) \) can be written as \( \tilde{L} = \tilde{L}_z \mathrm{d} z \) for some \( \tilde{L}_z \in H^{(0,0)}_{\mathrm{d}'}(M) \) and similarly with elements in $H^{(1,1)}_{\mathrm{d}'}(M)$. With a slight abuse of notation we will write
\begin{equation}
    \mathrm{d}z\wedge\mathcal{K}_{\mathrm{poly}}^\bullet  := H^{(\bullet,1)}_{\mathrm{d}'}(M) \,.
\end{equation}

With an explicit expression for both \( H^{(\bullet,0)}_{\mathrm{d}'}(M) \) and \( H^{(\bullet,1)}_{\mathrm{d}'}(M) \) at hand, we can proceed to determine the cohomology of the tensor product complex \( H_{\hat{\mathrm{d}}'}^{(\bullet,q)}(M \otimes \mathfrak{G}) \). To do so, we make use of the \emph{Künneth formula} which relates the cohomology of the tensor product with the tensor product of the cohomologies
\begin{equation}
    H_{\hat \dr'}^{(0,q)}(M\otimes \fG)\cong\bigoplus_{0=m+k}H^{(m,q)}_{\dr'}(M)\otimes H^k_{\mu_1}(\fG) \,,
\end{equation}
where the "correction" to the K{\"u}nneth formula, which are the derived Ext-groups, do not appear here as both of our complexes $\CA,\fG$ are free.

Since $\mathfrak{G}$ is concentrated in degrees $(-1)$ and $0$, its cohomology will also be concentrated in these degrees. Let $H^\bullet(\mathfrak{G}) = V \oplus \mathfrak{n}$
denote the cohomology of the Lie 2-algebra $\mathfrak{G}$, with
\begin{equation}
    V = \ker \mu_1 = H^{-1}(\mathfrak{G}), \qquad \mathfrak{n} = \mathfrak{g}/\mathrm{im}(\mu_1)= H^0(\mathfrak{G})\,,
\end{equation}
where $V$ is an Abelian $\mathfrak{n}$-module. By a theorem of Gerstenhaber, the cohomology $H^\bullet(\fG)$ is part of the data which characterizes Lie 2-algebras $\fG$ up to equivalence \cite{Wagemann+2021}. In the special case where $\mu_1=0$, namely when $\fG$ is called \textit{skeletal}, then $\fG = H^\bullet(\fG)$ is its own cohomology. On the other hand, if $\mu_1=\id$ is the identity, then $H^\bullet(\fG)=0$ is trivial.

We then have the following proposition.

\begin{prop}\label{raviolopoly}
The symmetry transformation parameters $(\tilde \alpha,\tilde \Gamma)$ corresponding to the right symmetry of the action \eqref{3daction} determine an element of the degree-0 cohomology group
\begin{equation}
    (\tilde \alpha,\tilde \Gamma) \in H^{(0,0)}_{\hat \dr'}(M\otimes \mathfrak{G}) = (\mathcal{K}_{\mathrm{poly}}^0 \otimes \fn ) \oplus (\mathcal{K}^1_{\mathrm{poly}} \otimes V) \,.
\end{equation}
Similarly, the higher currents $(\tilde L,\tilde H)$ corresponding to this symmetry determine an element of the degree-1 cohomology group 
\begin{equation}
    (\tilde L,\tilde H)\in H_{\hat \dr'}^{(0,1)}(M\otimes \fG)=((\dr z \wedge \mathcal{K}_{\mathrm{poly}}^0) \otimes \fn)\oplus  ((\dr z\wedge\mathcal{K}_{\mathrm{poly}}^1 ) \otimes V)
\end{equation}

 \end{prop}

 \begin{proof}
     The proof follows immediately from equations \eqref{ec:currentsincoho} and \eqref{ec:paramincoho} 
\end{proof}

The fundamental consequence of this proposition is that it provides a mode expansion for both our symmetry transformation parameters and higher currents. Specifically, we can express them as
\begin{equation}
\label{ec:paramodes}
    \tilde{\alpha} = \sum_{n=0}^\infty \tilde{\alpha}_n z^n \,,\quad \tilde{\Gamma} = \sum_{m=0}^\infty \tilde{\Gamma}_m \Omega^m
\end{equation}
with $\tilde{\alpha}_n \in \mathfrak{n}$ and $\tilde{\Gamma}_m \in V$. Similarly, the higher currents admit the expansions
\begin{equation}
\label{ec:curmodes}
    \tilde{L} = \sum_{n=0}^\infty \tilde{L}_n z^n \,\dr z\,,\quad \tilde{H} = \dr z \wedge\sum_{m=0}^\infty \tilde{H}_m  \Omega^m
\end{equation}
where $\tilde{L}_n \in  \fn$ are $\mathfrak{n}$ and $\tilde{H}_m \in V$.

The aim of the following sections is to study in detail the current algebra satisfied by these currents, and we shall do so in $\hat{\d}'$-cohomology. That is, we restrict attention to $\hat{\d}'$-closed (ie. on-shell) current operators modulo $\hat{\d}'$-exact terms. The resulting algebra is thus the one induced on the cohomology classes, or the \textit{cohomological descendant operators}. This is the natural framework for our purposes, since the explicit structure of the local $\hat{\d}'$-cohomology provides the expansion basis with which these currents can be analysed.


\section{Radial Quantisation}\label{affineraviolo}

With the mode expansions of the currents and transformation parameters in place, we can compute the commutation relations of the modes of $(\tilde L, \tilde H)$ and thereby determine the current algebra. The logic is directly analogous to the familiar two-dimensional case: singular terms in the operator product expansion determine the commutators of the corresponding modes once one passes to radial quantisation.

The reason this construction extends naturally to $\C\times\R$ is that the two ingredients underlying the two-dimensional argument also admit higher-dimensional analogues. The first is the mode expansion of the currents, established in the previous section. The second is Cauchy's residue theorem, whose role in two dimensions is to relate the singular part of the OPE to the commutation relations of the modes. In the present setting, the corresponding role is played by its higher-dimensional analogue, formulated in terms of the Bochner--Martinelli $(1,0)$-form $\omega$ introduced earlier in \eqref{ec:omega1}, which satisfies
\begin{equation}
\label{ec:intomegaes8pi}
    \oint_{S^2} \d z \wedge \omega = 8\pi i \, .
\end{equation}
This replaces contour integrals over circles in $\C$ by integrals over spheres in $\C\times\R$. In Euclidean signature, one may again take the radial coordinate as time and define a radial quantisation scheme. Within this framework, the generalised residue formula allows one to translate the singular part of the three-dimensional OPE into commutation relations of the current modes.

\subsection{The Ward Identity}
\label{sec:wardidentitiy}
To set up the computation of the Ward identity, we first fix some conventions. As preempted in \S\ref{sec:highcurrasdifforms}, from this point onward it is convenient to suppress the explicit $\d z$ appearing in the definition of the currents $(\tL,\tH)$. Equivalently, we contract the currents with the vector field $\partial_z$. This is entirely analogous to the two-dimensional case, where the current $J(z)=g^{-1}\partial_z g$ is viewed as a $\fg$-valued function rather than as a $1$-form. With a slight abuse of notation, we continue to denote these contractions by $(\tL,\tH)$, so that $\tL$ is now a $\fg$-valued function and $\tH$ an $\fh$-valued $(1,0)$-form.

The goal is to compute the Ward Identity associated with the right acting transformation 
\begin{equation}
\label{ec:ratransf}
   g\to g+g\ta\,,\quad \Theta \to \Theta-\ta\rhd \Theta+\tg\,.
\end{equation}
Hence, we consider $\ta$ and $\tg$ with compact support in a ball $B \subset \C\times\R$, vanishing outside $B$, and subject to the differential constraints
\begin{equation}
\label{ec:difconst}
   \hat \d'(\ta,\tg)=(\d'\ta + \mu_1\tg,\d'\tg) = 0\,.
\end{equation}
Under this transformation, the action \eqref{3daction} varies as
\begin{equation}
    \delta_{(\ta,\tg)}S = \oint_{\partial B}\d z \wedge (\langle \ta,\tH\rangle + \langle \tL,\tg\rangle) \,,
\end{equation}
where we have used \eqref{ec:difconst} together with Stokes' theorem.

Since we are working in $\hat{\d}'$-cohomology, we must first check that the algebraic operations appearing in the variation above are well defined on cohomology classes. In appendix \ref{sec:appendixproof}, we show that the bracket, the action, and the bilinear form descend to cohomology. From now on, we use the same notation $[-,-]$, $\mu_2$, and $\langle-,-\rangle$ for the induced bracket, action, and bilinear form on $H^\bullet(\fG)$. Moreover, given a basis \( \{t_a\} \) of \( \mathfrak{n} \) and \( \{s_b\} \) of \( V \), we shall use the same notation for the corresponding structure constants and bilinear form, whenever no confusion can arise:
\begin{equation}
    [t_a, t_b] = f_{ab}^c\, t_c\,, \qquad 
    \mu_2(t_a, s_b) = (\mu_2)_{ab}^c\, s_c\,, \qquad 
    \langle t_a, s_b \rangle = \kappa_{ab}\,.
\end{equation}
This allows us to express the currents and transformation parameters as linear combinations of the generators of the algebras,
\begin{equation}
    \ta= \sum_{a=1}^{\dim \fn}\ta^a t_a \,,\quad \tH = \sum_{b=1}^{\dim V}\tH^b s_b \,, \quad \tL = \sum_{a=1}^{\dim \fn}\tL^a t_a\,,\quad \tg=\sum_{b=1}^{\dim V}\tg^b s_b \,,
\end{equation}
and the Ward identity takes the form\footnote{Note that the symbol $\langle-\rangle$ denotes a correlation function, while the symbol $\langle-,-\rangle$ refers to the bilinear form on $\fG$. 
}
\begin{equation}
    \delta_{(\ta,\tg)}\langle X\rangle = \oint_{\partial B} \dr z\wedge \left(\ta^a\langle \tH^b X\rangle + \langle \tL^aX\rangle \tg_b\right)\kappa_{ab}\,.
\end{equation}
We can now use this relation to determine the operator product expansion between the currents. 

In doing so, it is convenient to make one further remark on notation. Although the theory is defined on $\C\times\R$, in what's next we shall suppress the $\tau$-coordinate and write the currents as depending only on $z$. This is possible because we are working in $\hat{\d}'$-cohomology, where the $\tau$-dependence of the relevant representatives is cohomologically trivial. This can be seen directly from the expansions of the currents in \eqref{ec:curmodes}: $\tL$ is manifestly a function of $z$ alone, while $\tH$ is expanded in the $\Omega$-basis, whose $\tau$-derivative is $\d'$-exact. Notably, this should be understood as a statement about cohomology classes, not about the underlying geometry: the $\tau$-direction still plays a role in the three-dimensional setting, but the operator product expansions below are statements in cohomology.

To derive the OPEs, we will make use of the following lemma.
\begin{lemma}
    Let $S_{w}^2$ denote a sphere centered at $w$. Then
    \begin{equation}
    \label{ec:gencauchyform}
        \frac{1}{8\pi i}\oint_{S_{w}^2}\d z \wedge (z-w)^n\Omega_{z-w}^m = \delta_{m,n}
    \end{equation}
where $\Omega_{z-w}^m = \frac{(-1)^m}{m!} \partial_z^m \omega(z-w)$ for $\omega$ defined in \eqref{ec:omega1}.
\end{lemma}

\begin{proof}
   The proof is an immediate consequence of the relation \eqref{ec:zntimesomegam}, together with the integral of $\omega$ over a sphere \eqref{ec:intomegaes8pi}.  
\end{proof}

We thus have the following proposition
\begin{prop}
\label{thm:opes}
    The currents $\tL$ and $\tH$ have the following operator product expansions
    \begin{align}
        & \tL^a(z)\tL^b(w) = \mathrm{reg}
        \label{ec:LLOPE}\\
        & \tH^a(z)\tL^b(w) = \Omega_{z-w}^0 (\tilde \mu_2)^{ab}_c \tL^c(w) + \Omega_{z-w}^1(\kappa^{-1})^{ab}+\mathrm{reg} \label{ec:HLOPE} \\
        &\tH^a(z)\tH^b(w) = \Omega_{z-w}^0 \tilde f^{ab}_c \tH^c(w) + \mathrm{reg}
        \label{ec:HHOPE}
    \end{align}
where $\mathrm{reg}$ stand for terms which are regular as $z \to w$, and where we have defined the dual structure constants
\begin{equation}
        \tilde f^{ab}_c = (\kappa^{-1})^{ax}(\kappa^{-1})^{by}f_{xy}^z\kappa_{zc},\qquad  (\tilde\mu_2)^{ab}_c = (\kappa^{-1})^{ax}\kappa_{cz}(\mu_2)_{xy}^z(\kappa^{-1})^{yb} \,.\label{dualbrackets}
    \end{equation}    
\end{prop}
\begin{remark}
Note that the bilinear form \(\langle \cdot,\cdot \rangle : \mathfrak{n} \times V \to \mathbb{C}\) pairs elements of \(\mathfrak{n}\) with those of \(V\), and thus, unlike the standard case, it does not provide an identification between each of these spaces with their duals individually. As a result, we cannot use \(\kappa\) to raise or lower indices; this is why we explicitly defined the dual structure constants $\tilde f^{ab}_c$ and $(\tilde{\mu}_2)_c^{ab}$. 
\end{remark}

\begin{proof}
   We begin by computing the variations of the currents $\tL$ and $\tH$ under the symmetry transformation \eqref{ec:ratransf}, which are given by
    \begin{align}
    \delta_{(\tilde \alpha, \tilde \Gamma)} \tilde{L} &= [\tilde L, \ta] + \partial_z \tilde \alpha \,, \\
    \delta_{(\tilde \alpha, \tilde \Gamma)} \tilde{H} &= -\mu_2(\tilde \alpha, \tilde{H}) + \mu_2(\tilde{L}, \tilde \Gamma) + \partial \tilde \Gamma \,.
\end{align}
taking the product of fields to be $X = \tL^b(w)X'$, with 
$X'=\prod_i \mathcal O_i(u_i)$ where $w \in B$ and all $u_i \notin B$, where the $\mathcal O_i$ denote arbitrary operator insertions, the vanishing of $\ta$ and $\tg$ outside $B$ implies that $\delta_{\ta,\tg}$ acts only on $\tL^b(w)$. The Ward identity takes the form
\begin{multline}
\label{ec:Wardidope}
    \langle ([\tilde L,\ta]^b+\partial_z\tilde\alpha^b)X'\rangle
    =\frac{1}{8\pi i}\oint_{\partial B}\d z\wedge \ta^c\,
        \langle \tH^a(z)\tL^b(w)X'\rangle \kappa_{ca} \\
    +\frac{1}{8\pi i}\oint_{\partial B}\d z\wedge 
        \langle \tL^a(z)\tL^b(w)X'\rangle \tg^c \kappa_{ca}\,.
\end{multline}
Since the left-hand side contains no terms proportional to $\tg$, we conclude
\begin{equation}
    \tL^a(z)\tL^b(w)=\mathrm{reg}\,.
\end{equation}

Let us now check that the OPE \eqref{ec:HLOPE} reproduces the remaining terms. Inserting the first contribution from \eqref{ec:HLOPE} in \eqref{ec:Wardidope} yields
\begin{equation}
\label{ec:OPEterm1a}
    \frac{1}{8\pi i}\oint_{\partial B}\d z\wedge \ta^c \Omega_{z-w}^0 
        (\tilde\mu_2)^{ab}_d \langle \tL^d(w)X'\rangle \kappa_{ca}
    =\ta^c(\tilde\mu_2)^{ab}_d \langle \tL^d(w)X'\rangle \kappa_{ca}\,,
\end{equation}
where we have used the generalized Cauchy formula \eqref{ec:gencauchyform}.  
On the other hand, the first term on the left-hand side of 
\eqref{ec:Wardidope} reads in components
\begin{equation}
\label{ec:OPEterm1b}
    \ta^c f^{b}_{dc}\langle \tL^d(w)X'\rangle\,.
\end{equation}

To compare the two expressions we invoke the ad-invariance of the bilinear form,
\begin{equation}
    \langle [t_a,t_b],s_c\rangle = \langle t_a,\mu_2(t_b,s_c)\rangle 
    \quad\Rightarrow\quad 
    f_{ab}^d \kappa_{dc} = \kappa_{ad}(\mu_2)_{bc}^d\,,
\end{equation}
which implies
\begin{equation}
    (\tilde\mu_2)^{ab}_d \kappa_{ca}  
    = (\kappa^{-1})^{ax}\kappa_{dz}(\mu_2)_{xy}^z(\kappa^{-1})^{yb}\kappa_{ca}
    = f^z_{dc}\kappa_{zy}(\kappa^{-1})^{yb}= f^b_{dc}\,,
\end{equation}
where in the first line equality used the definition of $(\tilde\mu_2)^{ab}_d$ given in \eqref{dualbrackets}, and in the third, the $\mathrm{ad}$-invariance of the bilinear form. Hence the contributions \eqref{ec:OPEterm1a} and \eqref{ec:OPEterm1b} agree. Similarly, if we insert the second contribution from \eqref{ec:HLOPE} in \eqref{ec:Wardidope} we find
\begin{equation}
 \begin{split}
\label{ec:OPEterm2a}
    \frac{1}{8\pi i}\oint_{\partial B}\d z\wedge \ta^c \Omega_{z-w}^1 
        (\kappa^{-1})^{ab} \langle X'\rangle \kappa_{ca}
    & =\frac{1}{8\pi i}\oint_{\partial B}\d z\wedge \partial_z\ta^b \Omega_{z-w}^0 
         \langle X'\rangle \\
         &= \partial_z \ta^b\langle X'\rangle
\end{split}   
\end{equation}
where we used the holomorphic derivative relation $\Omega_{z-w}^1=-\partial_z\Omega_{z-w}^0$ 
to integrate by parts in the first step, and then applied the generalized Cauchy
formula \eqref{ec:gencauchyform} in the second. This reproduces exactly the 
second term on the left-hand side of \eqref{ec:Wardidope}.

With a completely analogous computation, one verifies that the $\tH \tH$-OPE takes the form \eqref{ec:HHOPE}.

\end{proof}



\subsection{Commutation Relations}\label{sec:commutators}
As we mentioned at the beginning of this section, the essence of this construction is that, in analogy with the two-dimensional case, one may again perform radial quantisation. Since the theory is defined on $\C\times\R \cong \mathbb{R}^3$, we choose as time coordinate the radial variable $r(\underline{z})=\sqrt{\tau^2+|z|^2}$, where we recall, $\underline{z}=(z,\bar z,\tau)$. Equal-time slices are therefore spheres of fixed radius. Correlation functions are then identified with vacuum expectation values of radially ordered products of operators inserted at points $\underline{z},\underline{w}\in\C \times \R$,
\begin{equation}
    \langle a(\underline{z})b(\underline{w})\rangle
    = \langle 0|\mathcal{R}(a(\underline{z})b(\underline{w}))|0\rangle \,,
\end{equation}
where the radial-ordering operator $\mathcal{R}$ is defined by
\begin{equation}
    \mathcal{R}(a(\underline{z})b(\underline{w}))=
    \begin{cases}
        a(\underline{z})b(\underline{w}) & \text{if } r(\underline{z})>r(\underline{w})\,,\\
        b(\underline{w})a(\underline{z}) & \text{if } r(\underline{w})>r(\underline{z})\,.
    \end{cases}
\end{equation}
We now use radial ordering to convert the OPEs established in Proposition \ref{thm:opes} into a commutator formula. For the currents under consideration, the OPEs are either regular or have singular part proportional to the kernel $\Omega_{\underline{z}-\underline{w}}$. In the latter case, the singular contribution to the radially ordered product defines an operator-valued 1-form in the integration variable $\underline{z}$. Wedging with $\d z$ then yields an operator-valued $2$-form, which may be integrated over a small sphere surrounding the insertion at $\underline{w}$.

We now use radial ordering to evaluate the integral
\begin{equation}
    \oint_{S_{\underline{w}}^2}\d z \wedge \, \mathcal{R}(a(\underline{z})b(\underline{w}))\,,
\end{equation}
where $S_{\underline{w}}^2$ is a sphere centred at $\underline{w}$ with fixed radius. To proceed, we notice that a sphere $B_1$ centred at the origin with radius $R_1 > r(\underline{w})$ can be homotopically deformed\footnote{Up to homotopy, $B_1$ may be written as the connected sum of $S^2_{\underline{w}}$ and $B_2$, and the contribution from the attaching cylinder can be shown to vanish.} into $S_{\underline{w}}^2$ and another sphere $B_2$, which is also centred at the origin but with radius $R_2< r(\underline{w})$; see Fig. \ref{fig:radialorder}. These spheres $S_{\underline{w}}^2,B_2$ of course have the same orientation as $B_1$.

\begin{figure}[h]
    \centering
    \includegraphics[width=0.7\linewidth]{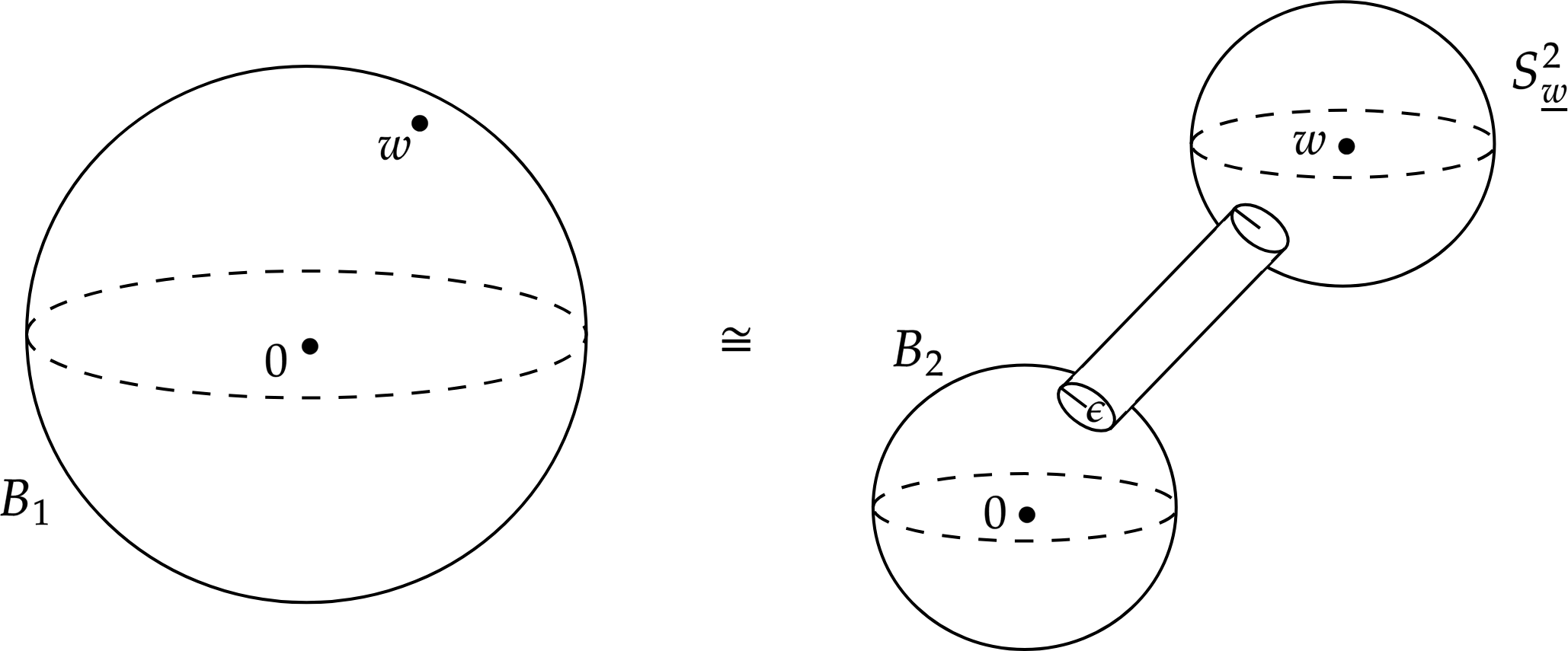}
    \caption{The configuration of spheres $S_{\underline{w}}^2,B_2$ upon deforming the sphere $B_1$. We then homotopically shrink the attaching cylinder of radius $\epsilon\to0$ to cancel its contributions from the commutator.}
    \label{fig:radialorder}
\end{figure}

As such, we can rearrange the integration over $S^2_{\underline{w}}$ into the integration over the larger sphere $B_1$ \textit{minus} that over the smaller sphere $B_2$.  Radial ordering then implies
\begin{equation}
\label{ec:cont2}
\begin{split}
      \oint_{S^2_{\underline w}} \d z \wedge \mathcal{R}(a(\underline z)b(\underline w))
&= \oint_{B_1} \d z \wedge a(\underline z)b(\underline w)-\oint_{B_2} \d z \wedge b(\underline w)a(\underline z)\\
&=[A,b(\underline w)] \,,
\end{split}
\end{equation}
where we have defined the operator
\begin{equation}
    A = \oint_B \d z \wedge a(\underline z) \,,
\end{equation}
with $B$ any fixed-radius sphere enclosing the insertion at $\underline w$. If we want this relation to hold as an operator identity, we must allow for an arbitrary number of additional insertions in a general correlator. In particular, the above deformation is justified only when $b(\underline w)$ is the unique insertion whose OPE with $a(\underline z)$ is singular in the region between the spheres $B_1$ and $B_2$.

Integrating this relation once more, we obtain the commutator of two operators defined by:
\begin{equation}
\label{ec:3dcommutator}
    [A,B]= \oint_{B_0} \d w \oint_{S^2_{\underline w}} \d z \wedge \mathcal{R}(a(\underline z)b(\underline w)) \,,
\end{equation}
where $B_0$ is a fixed-radius sphere enclosing the origin. Having established the commutator formula in the full three-dimensional geometry, we now return to the $\hat{\d}'$-cohomological representatives of the currents and again suppress the $\tau$-coordinate. Accordingly, in the computation of the mode algebra below we revert to the notation $z,w$. The resulting commutation relations are summarized in the following proposition.
\begin{prop}
\label{thm:higherkac}
    The commutation relations for the modes $\tL_n^a$, $\tH_m^a$ of the currents corresponding to the topological-holomorphic symmetry of the three-dimensional action \eqref{3daction} are given by 
\begin{equation}
\label{ec:higherpossbrack1}
      [\tH_n^a,\tH_m^b] = \tilde f^{ab}_c\tH_{n+m}^c \,, \quad [\tL^a_n,\tL^b_m] = 0 \,, 
\end{equation}
 \begin{equation}
 \label{ec:higherpossbrack2}
      [\tH_n^a,\tL_m^b]=\begin{cases}
          (\tilde \mu_2)^{ab}_c \tL_{m-n}^c + n(\kappa^{-1})^{ab} \delta_{n-1,m} \quad \text{if}\quad m\geq  n-1 \\
          0 \qquad\qquad \qquad\qquad\qquad\quad\qquad\text{otherwise}
      \end{cases} \,,
 \end{equation}
with the dual structure constants $\tilde f^{ab}_c$ and $(\tilde\mu_2)^{ab}_c$ defined in \eqref{dualbrackets}. 

\end{prop}

\begin{proof}
    We recall the mode expansions of the components of the currents in a basis of the Lie algebras $\fn$ and $V$ respectively
    \begin{equation}
        \tL^a(z) = \sum_{n=0}^\infty \tL^a_n z^n \,,\quad \tH^a(z) = \sum_{m=0}^\infty \tH^a_m \Omega^m \,,
    \end{equation}
where the modes can be written in terms of the components of the currents as 
\begin{equation}
    \tL_n^a = \frac{1}{8\pi i}\oint_{S^2} \dr z \wedge \Omega^n \tL^a(z) \,,\quad \tH^a_n=\frac{1}{8\pi i}\oint_{S^2} \dr z \wedge z^n  \tH^a(z)\,.
\end{equation}
Then, using the definition of the commutator \eqref{ec:3dcommutator} we can write
\begin{equation}
\begin{split}
    [\tH^a_n,\tH^b_m]
    &=\frac{1}{(8\pi i)^2}\oint_{S_0^2} \dr w \oint_{S_w^2} \dr z \wedge\,w^n \, z^m\, \mathcal{R}(\tH^a(z)\tH^b(w)) \\
    & =\frac{1}{(8\pi i)^2}\oint_{S_0^2} \dr w \oint_{S_w^2} \dr z \wedge\,w^n \, z^m  \left(\Omega_{z-w}^0 \tilde f^{ab}_d \tH^d(w) +\mathrm{reg}\right) 
\end{split}
\end{equation}
where in the second line we replaced the radially ordered product by the OPE \eqref{ec:HHOPE}, as the
singular part is invariant under radial ordering due to the antisymmetry of $\tilde f^{ab}_d$
and only this part contributes to the integral. Expanding $z^m$ around $w$ and using the generalised Cauchy formula \eqref{ec:gencauchyform} we find
\begin{equation}
    [\tH^a_n,\tH^b_m]=\frac{1}{8\pi i}\oint_{S_0^2} \dr w \wedge \,w^{n+m} \, \tilde f^{ab}_d \tH^d(w)= \tilde f^{ab}_d \tH_{n+m}\,.
\end{equation}
Next, we have that $[\tL_n^a,\tL_m^b]=0$ since the $\tL\tL$-OPE is regular \eqref{ec:LLOPE}. Finally, for the mixed commutator we have
\begin{equation}
\begin{split}
    [\tH_n^a,\tL_m^b] 
    &=\frac{1}{(8\pi i)^2}\oint_{S_0^2} \dr w \oint_{S_w^2} \dr z \wedge z^n \Omega_w^m \,\mathcal{R}(\tH^a(z)\tL^b(w)) \\
    &=\frac{1}{(8\pi i)^2}\oint_{S_0^2} \dr w \oint_{S_w^2} \dr z \wedge z^n\Omega_w^m\left(\Omega_{z-w}^0 (\tilde \mu_2)^{ab}_c \tL^c(w) + \Omega_{z-w}^1(\kappa^{-1})^{ab}\right) \\
    &= \frac{1}{(8\pi i)^2}\oint_{S_0^2} \dr w \,\Omega_w^m \oint_{S_w^2} \dr z \wedge \sum_{j=0}^n\binom{n}{j}(z-w)^j w^{n-j} \\
    &  \hspace{5cm}\times\left(\Omega_{z-w}^0 (\tilde \mu_2)^{ab}_c \tL^c(w) + \Omega_{z-w}^1(\kappa^{-1})^{ab} \right)\,,
\end{split}
\end{equation}
where in the second line we used the $\tH \tL$-OPE given in \eqref{ec:HLOPE} and in the third line we expanded $z^n$ around $w$. The first term contributes only for $j=0$, whereas the second term will be non vanishing for $j=1$. Thus, using the generalized cauchy formula \eqref{ec:gencauchyform} we obtain
\begin{equation}
\begin{split}
    [\tH_n^a,\tL_m^b] 
    &= \frac{1}{8\pi i}\oint_{S_0^2} \dr w \wedge\Omega_w^m \left(\,w^n\,(\tilde \mu_2)^{ab}_c \tL^c(w)  +n\,w^{n-1}(\kappa^{-1})^{ab} \right)\\
    &=(\tilde \mu_2)^{ab}_c \tL_{m-n}^c + n\,\delta_{n-1,m}(\kappa^{-1})^{ab} \,,
\end{split}
\end{equation}
where we used the relation $\Omega_w^m w^n = \Omega_w^{m-n}$ for $m\geq n$, and the generalized Cauchy formula \eqref{ec:gencauchyform} for the second term.

\end{proof}

\textbf{Theorem} \ref{thm:higherkac} gives us commutation relations of the modes of the currents associated to the infinite dimensional symemtry of the three-dimensional
action \eqref{3daction}. This is in complete analogy to the Kac-Moody algebra satisfied
by the modes of the conserved currents in the Wess-Zumino-Witten model. In
the next section, we will show that the current modes can be identified with the
generators of a centrally extended affine graded Lie algebra


\begin{remark}\label{ncchargealg}
    The Noether currents of the non-chiral 3d action \eqref{ec:nonchiralaction}, defined in terms of $\CB$,
    was studied in \S 6.3 of \cite{Chen:2023integrable}. The main result there was that the resulting differential graded commutator algebra --- analogues of \eqref{ec:higherpossbrack1}, \eqref{ec:higherpossbrack2} --- was found to be closely related to a higher derived notion of Lax integrability, similar to the situation with the Wess-Zumino-Witten model \cite{Alekseev1990-hq}. The authors expect that these methods could similarly be applied to analyze the Lax integrability of the chiral theory  \eqref{3daction}. 
\end{remark}

\section{The Centrally Extended Affine Graded Lie Algebra}
\label{sec:ceagla}

In the previous section, we computed the commutation relations of the currents associated with the infinite-dimensional symmetry of the three-dimensional action \eqref{3daction}. In this section, we prove that these commutation relations define
a centrally extended affine graded Lie algebra.

To begin, we consider the dual vector spaces \( \mathfrak{n}^\vee \) and \( V^\vee \), equipped with dual bases \( \{s^a\} \) and \( \{t^b\} \), respectively:
\begin{equation}
    \mathfrak{n}^\vee = \mathrm{span}(s^a) \,,\qquad V^\vee = \mathrm{span}(t^b)\,.
\end{equation}
These dual bases are constructed from the original bases \( \{t_a\} \) of \( \mathfrak{n} \) and \( \{s_b\} \) of \( V \), using the non-degenerate (invertible) bilinear form \( \langle \cdot, \cdot \rangle \), as follows:
\begin{equation}
    t^a = (\kappa^{-1})^{ab} \langle -, s_b \rangle \,,\qquad
    s^a = (\kappa^{-1})^{ab} \langle t_a, - \rangle\,.
\end{equation}
By construction, these satisfy the duality conditions:
\begin{equation}
    t^a(t_b) = \delta^a_b \,,\qquad s^a(s_b) = \delta^a_b\,.
\end{equation}
We endow the graded vector space \( \mathfrak{G}^\vee = \mathfrak{n}^\vee \oplus V^\vee \) with the following brackets:
\begin{equation}
    [t^a, t^b]^\vee = \tilde f^{ab}_c\, t^c \in V^\vee, \qquad
    \mu_2^\vee(t^a,s^b) = (\tilde\mu_2)^{ab}_c\, s^c \in \mathfrak{n}^\vee\,,
    \label{ec:aff2liealgbrack}
\end{equation}
where the structure constants are defined by
\begin{equation}
        \tilde f^{ab}_c = (\kappa^{-1})^{ax}(\kappa^{-1})^{by}f_{xy}^z\kappa_{zc},\qquad  (\tilde\mu_2)^{ab}_c = (\kappa^{-1})^{ax}\kappa_{cz}(\mu_2)_{xy}^z(\kappa^{-1})^{yb} \,,\label{ec:dualbrackets2}
    \end{equation}
as in equation~\eqref{dualbrackets}. In accordance with the literature  \cite{Chen:2012gz,Chen:2013,Bai_2013,chen:2022}, we will assign the cohomological degree $(-1)$ to $\fn^\vee$, and cohomological degree $0$ to $V^\vee$. The reason for this is given in \textit{Remark \ref{dualdegree}}.

\begin{remark}\label{dualdegree}
    Given a Lie 2-algebra $\fG = \fh\xrightarrow{\mu_1}\fg$, the dual Lie 2-algebra $\fG^\vee[1] = \fg^\vee\xrightarrow{\mu_1^\vee}\fh^\vee$ is given by the underlying dual complex together with a shift in degree \textit{down} by $1$, denoted by "$[1]$", such that $\fg^\vee$ has degree $(-1)$ and $\fh^\vee$ has degree $0$. This is to ensure that $\mu_1^\vee$ remains a cohomological differential which increases the degree by $1$, and hence $\fG^\vee[1]$ can still be treated as a Lie 2-algebra.
\end{remark} 

With the above grading conventions, we have the following:
\begin{prop}
The brackets defined in \eqref{ec:aff2liealgbrack} satisfy the graded Jacobi identity. Hence, they endow \(\fG^\vee =  \mathfrak{G}^\vee[-1] \) with the structure of a graded Lie algebra.
\end{prop}

\begin{proof}
We consider  
\begin{equation}
\begin{split}
[t^a,[t^b,t^c]^\vee]^\vee
&=\tilde f^{ad}_e \tilde f^{bc}_d t^e \\
&= (\kappa^{-1})^{ar} (\kappa^{-1})^{bs} (\kappa^{-1})^{ct} \big(f_{ry}^z f_{st}^y\big) \kappa_{ze} t^e\,,
\end{split}
\end{equation}
where we used the definition of the dual structure constants given in \eqref{ec:dualbrackets2}. Summing the three terms that appear in the graded Jacobi identity, we obtain 
\begin{multline}
    [t^a,[t^b,t^c]^\vee]^\vee + [t^c,[t^a,t^b]^\vee]^\vee + [t^b,[t^c,t^a]^\vee]^\vee = \\
    (\kappa^{-1})^{ar} (\kappa^{-1})^{bs} (\kappa^{-1})^{ct} \big(f_{ry}^z f_{st}^y+f_{ty}^z f_{rs}^y+f_{sy}^z f_{tr}^y\big) \kappa_{ze} t^e = 0
\end{multline}
The expression in parentheses vanishes by the Jacobi identity for the Lie bracket of $\fn$, whose structure constants are given by $f_{ab}^c$. Next, for the mixed brackets involving both $t^a$ and $s^b$, we compute:
\begin{equation}
    \begin{split}
         \mu_2^\vee(t^a,\mu_2^\vee(t^b,s^c))
    &=(\tilde\mu_2)^{ad}_e(\tilde\mu_2)^{bc}_d s^e\\
    &= \kappa_{ez}\big( (\mu_2)_{xy}^z(\mu_2)_{pq}^y\big) (\kappa^{-1})^{ax}(\kappa^{-1})^{bp}(\kappa^{-1})^{qc} s^e \,.
    \end{split}
\end{equation}
As in the previous case, this vanishes due to the fact that $\mu_2$ descends to a derivation on $V$, as shown in Proposition~\ref{prop:nisanalgebra}.
\end{proof}

\medskip

We denote by $\CK_u^\bullet$ the formal raviolo power series\footnote{Here, $\mathcal{K}$ is the version of the raviolo polynomials $\mathcal{K}_\text{poly}$ \eqref{ec:ravpoly} with $\mathbb{C}[u]$ in degree 0 replaced by the formal power series $\mathbb{C}[\![u]\!]$.} locally around the holomorphic variable $u=0$, such that its graded components are given by \( \CK_u^0 =\C[\![u]\!]\) and \( \CK_u^1 = \mathrm{span}_\C\{\Omega^0_u,\Omega^1_u,\dots\} \). With this at hand, we define the higher affine graded Lie algebra \( \widehat{\mathfrak{G}}^\vee \) as the total degree 0 piece of the tensor product complex
\begin{equation}
    \widehat{\mathfrak{G}}^\vee =\big( \fG^\vee\otimes\mathcal{K}^\bullet_u\big)_0 = \left(V^\vee \otimes \CK_u^0\right) \oplus \left(\fn^\vee \otimes \CK_u^1\right)\,.
\end{equation}
 The bracket on \( \widehat{\mathfrak{G}}^\vee \) is defined as follows:
For \( X \otimes f(u), Y \otimes g(u) \in V^\vee \otimes \CK_u^0 \), we set
\begin{equation}
    [X \otimes f(u), Y \otimes g(u)] = [X,Y]^\vee \otimes f(u)g(u) \in V^\vee \otimes \CK_u^0\,.
\end{equation}
For \( X \otimes f(u) \in V^\vee \otimes \CK_u^0 \) and \( A \otimes k(u) \in \fn^\vee \otimes \CK_u^1 \), the bracket is
\begin{equation}
    [X \otimes f(u), A \otimes k(u)] = \mu_2^\vee(X,A) \otimes f(u)k(u) \in \fn^\vee \otimes \CK_u^1\,.
\end{equation}
Finally, for two elements \( A \otimes k(u), B \otimes l(u) \in \fn^\vee \otimes \CK_u^1 \), the bracket vanishes:
\begin{equation}
    [A \otimes k(u), B \otimes l(u)] = 0\,.
\end{equation}
This bracket structure is consistent with the relations 
\begin{equation}
\label{ec:raviolorelations}
    u^n \Omega^m_u = 
    \begin{cases}
        0 & \text{if } n > m\,, \\
        \Omega^{m-n}_u & \text{if } n \leq m\,,
    \end{cases}
    \qquad \Omega^n_u \Omega^m_u = 0\,.
\end{equation}
In accordance with the above bracket structure, we shall assign a \textit{homological} grading to $\widehat{\fG}^\vee$ such that
\begin{equation}\label{degconvention}
     \text{deg}\left(V^\vee \otimes \CK_u^0\right) =0,\qquad \text{deg}\left(\fn^\vee \otimes \CK_u^1\right)=1\,.
\end{equation}
We shall adopt this degree convention in the following.

We can now show that the graded Lie brackets of $
\widehat{\fG}^\vee$ reproduce the commutators of the higher currents \eqref{ec:higherpossbrack1} and \eqref{ec:higherpossbrack2} up to the central term. Specifically, consider the bijective graded linear map
\begin{equation}
\label{ec:modesidentification}
    \tH_n^a \mapsto t^a \otimes u^n \,, \qquad \tL_m^a \mapsto s^a \otimes \Omega^m_u \,,
\end{equation}
which identifies the modes of the currents with the generators of the affine graded Lie algebra. Using the relations in equation~\eqref{ec:raviolorelations}, one readily verifies that the non-central contributions to the commutators in equations~\eqref{ec:higherpossbrack1} and~\eqref{ec:higherpossbrack2} coincide precisely with the graded Lie bracket on $\widehat{\mathfrak{G}}^\vee$.

\subsection{The Central Extension}

The final ingredient in the construction is the central extension of the affine graded Lie algebra $\widehat{\fG}^\vee$. This is determined by a
2-cocycle, whose definition requires an appropriate invariant bilinear form together with an analogue of the two-dimensional residue pairing.

For the residue pairing, we will use the integration along $S^2$ introduced in the previous sections which satisfies
\begin{equation}
\label{ec:res3}
    \frac{1}{8\pi i} \oint_{S^2} \mathrm{d}u \wedge u^n \Omega^m_u = \delta_{n,m}\,. 
\end{equation}
With this, we can define a residue pairing as a map 
\[
\operatorname{Res} =\operatorname{Res}_{u=0} : \mathcal{K}_u^\bullet \to \mathbb{C}[-1]
\]
which acts on the raviolo polynomial generators as
\begin{equation}
\label{ec:cuantodares}
    \operatorname{Res}(u^n) = 0\,, \quad \operatorname{Res}(u^n\Omega_u^m) = \frac{1}{8\pi i}\oint_{S^2}\mathrm{d}u\wedge u^n\Omega^m_u = \delta_{n,m}\,.
\end{equation}
The notation $\C[-1]$ reflects the fact that we are working with a chain complex with $\C$ in degree 1 --- namely, the notation "$[-1]$" indicates that the degree has been shifted \textit{up} by 1; cf. \textit{Remark \ref{dualdegree}}. Concretely, it indicates that $\operatorname{Res}$ vanishes on $(0,0)$-forms, while its action on $(1,0)$-forms yields values in $\C$, as explicitly shown in \eqref{ec:cuantodares}. 

On the other hand, to define a bilinear form on \( \mathfrak{G}^\vee = \mathfrak{n}^\vee \oplus V^\vee \), we make use of the bilinear pairing on \( H^\bullet(\mathfrak{G}) = V \oplus \mathfrak{n} \), which we establish to be non-degenerate and invariant in \textbf{Proposition}~\ref{prop:nisanalgebra}. Recall that, for a basis \( \{t_a\} \) of \( \mathfrak{n} \) and \( \{s_b\} \) of \( V \), this pairing is defined by
\(\langle t_a, s_b \rangle = \kappa_{ab}\) where \( \kappa_{ab} \) is an invertible matrix due to the non-degeneracy of the form. Taking the basis $\{s^a\}$ of $\fn^\vee$ and $\{t^b\}$ of $V^\vee$ we define the bilinear form $\langle\cdot,\cdot\rangle^\vee:V^\vee \times \fn^\vee \to \C$ by
\begin{equation}
\label{ec:kilform}
    \langle t^a,s^b\rangle^\vee = \big(\kappa^{-1}\big)^{ab} \,.
\end{equation}
In particular, we have the following
\begin{prop}
    The pairing $\langle \cdot,\cdot\rangle^\vee : V^\vee\times \fn^\vee \to \C$ is invariant. 
\end{prop}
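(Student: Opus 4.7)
My plan is to reduce invariance to a single structure-constant identity and then recognize it as the invariance of the original pairing on $H^\bullet(\mathfrak{G})$, which is (implicitly) established in Proposition \ref{prop:nisanalgebra}.

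\emph{Step 1: Isolate the non-trivial case.} Because $[\mathfrak{n}^\vee,\mathfrak{n}^\vee]^\vee = 0$ and $\langle\cdot,\cdot\rangle^\vee$ vanishes off $V^\vee \otimes \mathfrak{n}^\vee$, the only non-trivial instance of the graded ad-invariance relation $\langle[X,Y]^\vee,Z\rangle^\vee + (-1)^{|X||Y|}\langle Y,[X,Z]^\vee\rangle^\vee = 0$ is the one with $X,Y\in V^\vee$ (both of degree $0$) and $Z\in\mathfrak{n}^\vee$. Evaluating on basis elements $X=t^a$, $Y=t^b$, $Z=s^c$ turns this into the component equation
\[
    \tilde f^{ab}_d\,(\kappa^{-1})^{dc} + (\tilde\mu_2)^{ac}_d\,(\kappa^{-1})^{bd} = 0.
\]

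\emph{Step 2: Reduce to the original invariance.} Substitute the explicit formulas \eqref{ec:dualbrackets2} for $\tilde f$ and $\tilde\mu_2$ and collapse each contracted pair $\kappa\kappa^{-1}$ using the two relations $\kappa^{-1}\kappa = \mathbb{1}_\mathfrak{n}$ and $\kappa\kappa^{-1} = \mathbb{1}_V$. The left-hand side simplifies to $(\kappa^{-1})^{ax}(\kappa^{-1})^{by}f_{xy}^c$ and the right-hand side to $(\kappa^{-1})^{ax}(\mu_2)_{xy}^b(\kappa^{-1})^{yc}$. Contracting the resulting identity with the triple $\kappa_{pa}\kappa_{qb}\kappa_{cr}$ to remove every remaining $\kappa^{-1}$, the desired identity reduces to
\[
    f_{ab}^d\,\kappa_{dc} + (\mu_2)_{ac}^d\,\kappa_{bd} = 0,
\]
which is precisely the invariance of the induced degree $-1$ pairing on $\mathfrak{n}\oplus V = H^\bullet(\mathfrak{G})$. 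This in turn follows directly from the invariance of the bilinear form on $\mathfrak{G}$ transferred through the projection and inclusion maps $p,\iota$ of Proposition \ref{prop:nisanalgebra}, by the same argument used there to establish the graded Jacobi identity.

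The only subtle point is bookkeeping with the two distinct inverse relations $\kappa\kappa^{-1} = \mathbb{1}_V$ and $\kappa^{-1}\kappa = \mathbb{1}_\mathfrak{n}$: since $\langle\cdot,\cdot\rangle$ does not identify $\mathfrak{n}$ or $V$ with its own dual but rather pairs them against each other, one must apply the correct relation depending on the index types being contracted. Apart from this, the proof is purely mechanical.
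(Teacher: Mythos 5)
Your proof is correct and follows essentially the same route as the paper's: both reduce invariance to the component identity relating $\tilde f^{ab}_c$ and $(\tilde\mu_2)^{ab}_c$ through $\kappa^{-1}$, strip off the $\kappa\kappa^{-1}$ contractions, and land on the invariance of the induced pairing on $\mathfrak{n}\oplus V$ inherited from the degree $-1$ form on $\mathfrak{G}$. The only cosmetic difference is that you phrase invariance in the skew-adjointness form $\langle[X,Y]^\vee,Z\rangle^\vee+\langle Y,[X,Z]^\vee\rangle^\vee=0$, whereas the paper uses the equivalent cyclic form $\langle[t^a,t^b]^\vee,s^c\rangle^\vee=\langle t^a,\mu_2^\vee(t^b,s^c)\rangle^\vee$; these agree by antisymmetry of the bracket.
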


\begin{proof}
We aim to show that the bilinear pairing \(\langle\cdot,\cdot\rangle^\vee\) on \(\fG^\vee\) is invariant under the dual Lie bracket structure, specifically that
\begin{equation}
    \langle [t^a,t^b]^\vee,s^c\rangle^\vee = \langle t^a, \mu_2^\vee(t^b,s^c)\rangle^\vee\,.
\end{equation}
Using the explicit expressions for the dual brackets from \eqref{ec:aff2liealgbrack} and the dual bilinear form from \eqref{ec:kilform}, this identity becomes
\begin{equation}
    (\kappa^{-1})^{dc}\,\tilde{f}^{ab}_d = (\kappa^{-1})^{ad}\,(\tilde{\mu}_2)^{bc}_d\,.
\end{equation}
Substituting the expressions for the dual structure constants in terms of the original ones as in \eqref{ec:dualbrackets2}, we find the condition
\begin{equation}
\label{ec:loquequieroprobar}
    (\kappa^{-1})^{ax}(\kappa^{-1})^{by} f_{xy}^c = (\kappa^{-1})^{bx} (\mu_2)^a_{xy} (\kappa^{-1})^{yc}\,.
\end{equation}

Now, recall that the invariance of the original bilinear pairing \(\langle\cdot,\cdot\rangle:\mathfrak{n} \times V \to \mathbb{C}\) implies
\begin{equation}
    \langle [t_a,t_b], s_c \rangle = \langle t_a, \mu_2(t_b, s_c) \rangle\,.
\end{equation}
Writing this in components we obtain
\begin{equation}
    f_{ab}^d\, \kappa_{dc} = \kappa_{ad} \, (\mu_2)_{bc}^d\,,
\end{equation}

which is precisely equivalent to \eqref{ec:loquequieroprobar} upon multiplying both sides by the inverse of \(\kappa\). This completes the proof.

\end{proof}

With both the residue pairing and the bilinear form in place, we are now ready to define the $2$-cocycle that determines the central extension of the affine graded Lie algebra. We consider, for $X \otimes f(u)\in V^\vee \otimes \CK_u^0$ and $A \otimes k(\Omega_u)\in \fn^\vee \otimes \CK_u^1$
\begin{equation}
\label{ec:2cocycledef}
    \mathfrak{w}(X \otimes f(u), A \otimes k(\Omega_u)) = \langle X, A \rangle^\vee \, \operatorname{Res}\left( \partial_u f(u) \, k(\Omega_u) \right),
\end{equation}
and extend it to be zero on pairs of elements both in \(V^\vee \otimes \CK_u^0\) and in \(\fn^\vee \otimes \CK_u^1\). 
\begin{prop}
    The bilinear map $\mathfrak{w} : \widehat{\fG}^\vee\times \widehat{\fG}^\vee \to \C[-1]$ determines a \textbf{graded Lie algebra 2-cocycle}. 
\end{prop}

\begin{proof}
To verify that $\mathfrak{w}$ defines a graded Lie algebra $2$-cocycle, we must check that
\begin{multline}
\mathfrak{w}\big([t^a \otimes u^n,\, t^b \otimes u^m]^\vee,\, s^c \otimes \Omega^k_u\big) 
+ \mathfrak{w}\big(\mu_2^\vee(s^c\otimes \Omega_u^k,\, t^a\otimes u^n),\, t^b\otimes u^m\big)\\
+ \mathfrak{w}\big(\mu_2^\vee(t^b\otimes u^m,\, s^c\otimes \Omega_u^k),\, t^a\otimes u^n\big) = 0\,.
\end{multline}
Using the structure constants and the identification $z^n \Omega^m = \Omega^{m-n}$, this becomes
\begin{multline}
\mathfrak{w}\big(\tilde{f}^{ab}_d\, t^d \otimes u^{n+m},\, s^c \otimes \Omega_u^k\big) 
+ \mathfrak{w}\big((\tilde{\mu}_2)^{ca}_d\, s^d \otimes \Omega_u^{k-n},\, t^b \otimes u^m\big) \\
+ \mathfrak{w}\big((\tilde{\mu}_2)^{bc}_d\, s^d \otimes \Omega^{k-m}_u,\, t^a \otimes u^n\big)\,.
\end{multline}
Applying the definition of $\mathfrak{w}$ and the identity $\partial_u \Omega^m_u = -(m+1)\Omega_u^{m+1}$, we find
\begin{multline}
\tilde{f}^{ab}_d\, (\kappa^{-1})^{dc} (n+m)\, \delta_{n+m-1,k}
- (\tilde{\mu}_2)^{ca}_d\, (\kappa^{-1})^{bd} (k-n+1)\, \delta_{m,k-n+1} \\
- (\tilde{\mu}_2)^{bc}_d\, (\kappa^{-1})^{da} (k-m+1)\, \delta_{n,k-m+1}\,.
\end{multline}
Next, we use the invariance of the bilinear form $\langle \cdot, \cdot \rangle^\vee$, which implies
\begin{equation}
\tilde{f}^{ab}_d\, (\kappa^{-1})^{dc}
= (\tilde{\mu}_2)^{bc}_d\, (\kappa^{-1})^{da}
= (\tilde{\mu}_2)^{ca}_d\, (\kappa^{-1})^{bd}\,.
\end{equation}
Substituting this into the previous expression, we obtain
\begin{multline}
\tilde{f}^{ab}_d\, (\kappa^{-1})^{dc}
\left[ (n+m)\, \delta_{n+m-1,k} - (k-n+1)\, \delta_{m,k-n+1} - (k-m+1)\, \delta_{n,k-m+1} \right] \\
= 2\, \tilde{f}^{ab}_d\, (\kappa^{-1})^{dc}\, (n+m-k-1)\, \delta_{n+m-k-1,0} = 0\,,
\end{multline}
which proves the 2-cocycle condition.
\end{proof}

Using the identification \eqref{ec:modesidentification}, it is straightforward to verify that this $2$-cocycle produces 
\begin{equation}
    \mathfrak{w} (t^a\otimes u^n,s^b\otimes \Omega_u^m) = \langle t^a,s^b\rangle^\vee \mathrm{Res} (nu^{n-1} \Omega_u^m) = n \big(\kappa^{-1}\big)^{ab}\delta_{n-1,m}\,,
\end{equation}
which is nothing but the central term in the commutator brackets of the currents given in \eqref{ec:higherpossbrack2}. We have thus shown:

\begin{theorem}\label{mainthm}
The algebra of the currents of the 3d theory \eqref{3daction} defines a centrally extended affine Lie algebra $\widehat{\fG}^\vee_{\mathfrak{w}}$ fitting into the central extension sequence
\begin{equation}
    0 \rightarrow \C[-1] \rightarrow \widehat{\fG}^\vee_{\mathfrak{w}} \rightarrow \widehat{\fG}^\vee \rightarrow 0 \,,\label{sesq}
\end{equation}
where the graded Lie bracket is given defined as follows:
For \( X \otimes f(u), Y \otimes g(u) \in V^\vee \otimes \CK_u^0 \), we have
\begin{equation}
    [X \otimes f(u), Y \otimes g(u)] = [X,Y]^\vee \otimes f(u)g(u) \,.
\end{equation}
For \( X \otimes f(u) \in V^\vee \otimes \CK_u^0 \) and \( A \otimes k(\Omega_u) \in \fn^\vee \otimes \CK_u^1 \), the bracket is
\begin{equation}
    [X \otimes f(u), A \otimes k(\Omega_u)] = \mu_2^\vee(X,A) \otimes f(u)k(\Omega_u) + \langle X,A\rangle^\vee\mathrm{Res}(\partial_u f(u)k(\Omega_u))\,.
\end{equation}
Finally, for two elements \( A \otimes k(\Omega_u), B \otimes l(\Omega_u) \in \fn^\vee \otimes \CK_u^1 \), the bracket vanishes:
\begin{equation}
    [A \otimes k(\Omega_u), B \otimes l(\Omega_u)] = 0\,.
\end{equation}

\end{theorem}

\begin{proof}
    The proof follows from the bijective identification 
    \begin{equation}
    \label{ec:identification}
        \tH_n^a \mapsto t^a\otimes u^n \,,\quad \tL_m^a \mapsto s^a\otimes \Omega^m_u\,.
    \end{equation}
\end{proof}

\subsection{Higher level quantization}

\textbf{Theorem \ref{mainthm}} naturally raises the question of whether the central extension of the affine graded Lie algebra $\widehat{\mathfrak G}^{\vee}_{\mathfrak w}$ is quantized. A useful point of reference is the ordinary affine Kac--Moody case, where the quantization of the level may be understood entirely in terms of the relation between central extensions of the loop algebra and central extensions of the loop group.

Indeed, for a complex simple Lie algebra $\mathfrak g$, the loop algebra $L\mathfrak g=\mathfrak g(\!(z)\!)$ admits, up to equivalence, a one-parameter family of central extensions
\begin{equation}
    0 \longrightarrow \mathbb C \longrightarrow \widehat{\mathfrak g}_{k} \longrightarrow L\mathfrak g \longrightarrow 0,
\end{equation}
since $H^2_{\mathrm{Lie}}(L\mathfrak g,\mathbb C)\cong \mathbb C$. After choosing a normalised generator of this one dimensional cohomology group, any central extension is determined by a scalar 
$k\in \C$, which is identified with the level of the extension. At the group level, however, one must further require that the corresponding Lie algebra extension integrate to a central extension of the loop group,
\begin{equation}
    1 \longrightarrow U(1) \longrightarrow \widehat{LG}_{k} \longrightarrow LG \longrightarrow 1.
\end{equation}
For $G$ compact, simple, and simply connected, such extensions are classified by the integral cohomology group
$H^3(G,\mathbb Z)\cong \mathbb Z$. Thus, although the loop algebra admits central extensions for arbitrary complex $k$, only those for which the corresponding class is integral integrate to genuine central extensions of the loop group. This is the origin of the quantization condition $k\in\mathbb Z$ in the ordinary affine setting.

An analogous argument may then be proposed for the graded extension sequence \eqref{sesq}. Provided the balancing $\kappa$ of the Lie $2$-algebra $\mathfrak G$ is unique up to scalar multiplication, one may expect the Lie $2$-algebra cohomology $H^2(\widehat{\mathfrak G}^{\vee},\mathbb C[-1])$ classifying the central extension, to be one-dimensional. In that case, after fixing a normalisation of the cocycle $\mathfrak w$ in \eqref{ec:2cocycledef}, the corresponding extension class is determined by a scalar $k\in\mathbb C$. The parameter $k$ may then be interpreted as the higher analogue of the affine level.
At the level of Lie $2$-groups, one expects a corresponding short exact sequence
\begin{equation}
\label{ec:groupextension}
    U(1)[-1]\to \widehat{\mathbb G}^{\vee}_{k}\to \widehat{\mathbb G}^{\vee},
\end{equation}
where the shifted group $U(1)[-1]=\mathbf B U(1)$ is understood as the delooping. Such an extension encods a multiplicative $U(1)$-bundle gerbe, ie. am appropriate higher group-cohomological class, on the infinite-dimensional Lie groupoid $\widehat{\mathbb G}^{\vee}$. The allowed values of the higher level $k$ should therefore be determined by the classification of such structures. This is currently part of the developing theory of smooth $2$-group extensions and bundle gerbes \cite{Ludewig:2023,Schommer_Pries_2011,Bunk:2020rju}, and the corresponding classification problem in the present infinite-dimensional setting does not yet appear to be known.

\section{The Raviolo Vertex Algebra}
\label{sec:vertexalgebra}

As discussed in the previous sections, much of the formalism developed there mirrors that of two-dimensional conformal field theories: the operator product expansion, radial quantisation, and the emergence of an infinite-dimensional symmetry algebra. A central feature in this setting, is the state-operator correspondence, which establishes a one-to-one relation between local operators in the field theory and states in the associated Hilbert space. It is this correspondence that ultimately allows one to pass consistently from the field-theoretic description to the Hilbert-space formulation.

In two dimensions, this formalism is rigorously captured by the notion of a vertex algebra. A vertex algebra provides a unified and mathematically precise framework that encompasses the various cases of interest—such as Virasoro and affine Kac–Moody algebras—while formalising many of the manipulations commonly used in practice. In particular, it offers a systematic way to organise and justify the procedures involved in the computation of physical quantities.

A key ingredient in \cite{Garner:2023zqn} is the introduction of the \textit{raviolo vertex algebra}, a three-dimensional analogue of the standard vertex algebra. From the perspective of quantum field theory, it encodes the algebra of local operators in a theory that is partially holomorphic and partially topological, just as ordinary vertex algebras capture the operator algebra of two-dimensional chiral theories.

In this section we show how the currents $\tL$ and $\tH$, together with their OPEs given in \textbf{Theorem} \ref{thm:opes}, can be used to construct a raviolo vertex algebra. For completeness, we begin by recalling the relevant definitions and results, and then proceed to the explicit construction of the raviolo vertex algebra associated with our three-dimensional theory \eqref{3daction}.

\subsection{A brief review of raviolo vertex algebras}

Before turning to the structures that define a raviolo vertex algebra, we introduce the \emph{raviolo delta function}, the analogue of the usual delta function on $\C$ but adapted to the geometry of $\C\times\R$. It is given by \cite{Garner:2023zqn}
\begin{equation}
\label{ec:deltafunction}
    \Delta(z-w) = \sum_{n\geq 0}w^n\Omega_z^n-\sum_{n\geq 0} z^n\Omega_w ^n\,, 
\end{equation}
and satisfies 
\begin{equation}
    \frac{1}{8\pi i}\oint_{S^2}\d z \wedge \Delta(z-w)f(z) = f(w)
\end{equation}
for any $f\in \mathcal{K}_{\mathrm{poly}}^\bullet$. Further properties of this distribution can be found in \cite{Garner:2023zqn}; we will only use its definition here.

With the delta distribution at hand, we consider a $\mathbb{Z}$-graded vector space $\mathcal{V}$, which will be the space of states of the theory. The grading is taken to be by cohomological degree. The essential object in this construction is the \textit{raviolo field} which will allow us to organize the action of a given local operator on $\CV$. 

\begin{dfn}
    A \textbf{raviolo field} on $\mathcal{V}$ is an element
    \begin{equation}
        A(z) = \sum_{m \geq  0} z^{m}A^+_m + \Omega_z^m A^-_m \,,
    \end{equation}
    such that for any $v\in V$ there exist $N$ sufficiently large with 
    \begin{equation}
        A^-_m v= 0\,\quad \text{for all }\,m\geq N\,.
    \end{equation}
We say that a field $A(z)$ is \textbf{homogeneous} of degree $|A|$ if, for every $m$, 
the mode 
$A^+_m : \CV \to \CV$ 
is an endomorphism of degree $|A|$, 
while $A^-_m : \CV \to \CV$
is an endomorphism of degree $|A|-1$. We denote the space of raviolo fields by $\mathcal{F}_{rav}(\CV)$.

\end{dfn}

In a QFT, one is often interested in quantities 
such as $\langle \Psi | A(z)^2 | \Psi \rangle$, the expectation value of the square of a quantum field $A(z)$ at a point $z$ in a state $|\Psi\rangle$. 
As seen explicitly in the OPE computations of the previous section, such expressions typically diverge. 
Physically, this reflects the fact that the fluctuations of $A(z)$ in the chosen state are infinite; 
equivalently, the variance of individual measurements of $A(z)$ is unbounded. 
What one ultimately seeks, however, is an operator whose classical limit is $A(z)^2$. 
To construct such an object, one must remove the short-distance divergences and retain a finite part, 
which can be interpreted as capturing the large-scale behaviour of the classical observable $A(z)^2$, 
rather than the short-scale fluctuations of the quantum field \cite{Hollands:2023txn}.  

In this setting, the operator product expansion provides a precise description of the singular behaviour 
that appears when fields are multiplied at coincident points. This in turn motivates the definition of the 
\emph{normal ordered product}, obtained by discarding the singular contributions.

\begin{dfn}
    Given $A(z)$ and $B(z)$ raviolo fields on $\mathcal{V}$, we define their \textbf{normal ordered product} $:A(z)B(w):$ as
    \begin{equation}
        :A(z)B(w): = A(z)_+B(w)+(-1)^{|A||B|}B(w)A(z)_-
    \end{equation}
where $A_+(z)=\sum_{m\geq 0}z^m A^+_m $ and $A_-(z) = \sum_{m\geq 0}A^-_m\Omega^m$    
\end{dfn}

The definition of the normal ordered product is entirely analogous to the two-dimensional case. It is given by the operator product expansion with the singular terms removed, that is, by the regular part of the OPE.

Normal ordering gives a consistent prescription for defining composite operators at coincident points. The other key structural ingredient is locality. In quantum field theory, locality means that the commutator of two fields vanishes when evaluated at distinct spacetime points, so that operators at spacelike separation commute. In the present setting, as seen in the OPEs computed in section \S \ref{affineraviolo}, the number of singular terms is always finite. This reflects the strong constraints imposed by the partial holomorphic–topological symmetry, which tightly restricts the possible short-distance behaviour of composite operators. It is precisely this rigidity that makes it possible to define a consistent algebraic structure governing the product of fields—something that would not be available in a more general quantum field theory. More precisely, we have

\begin{dfn}
    Given raviolo fields $A(z)$ and $B(w)$ on $\mathcal{V}$ we say that they are \textbf{mutually local} if there exist $N\geq 0$ such that 
    \begin{equation}
        (z-w)^{N+1}[A(z),B(w)]=0
    \end{equation}
    and 
    \begin{equation}
        \left(\Omega_z^m -\sum_{n=0}^N(w-z)^n\binom{m+n}{n}\Omega_w^{m+n}\right)[A(z),B(w)]=0 \,.
    \end{equation}
\end{dfn}
As shown in \cite{Garner:2023zqn}, mutual locality admits several equivalent formulations, closely mirroring those found in the theory of vertex algebras \cite{Kac1998,frenkel2004vertex}. The above definition makes it explicit that mutually local fields possess an OPE with only finitely many singular terms. In practice, however, it will be more convenient to use an equivalent formulation, stated in the following lemma.
\begin{lemma}
\label{ec:equivmutualloc}
    The raviolo fields $A(z)$ and $B(w)$ are mutually local if and only if there exist raviolo fields $C_n(w)$ such that 
    \begin{equation}
        [A(z),B(w)]=\sum_{n=0}^N \frac{1}{n!}\partial_w^n\Delta(z-w)C_n(w)
    \end{equation}
\end{lemma}
A proof of this lemma can be find in \textbf{Proposition 2.2.2} of \cite{Garner:2023zqn}. 

Together, normal ordering and mutual locality ensure that the collection of fields is closed under products: composite operators can be defined consistently, and their short-distance singularities are always under control. Beyond this, and in close analogy with the two-dimensional case, one can establish a one-to-one identification between raviolo fields and states in the vector space $\mathcal V$. Thus, in addition to possessing an algebra of operators, the theory is equipped with a vector space structure that organises these operators in a systematic way. This feature is known as the \emph{state–operator correspondence}, realised as a linear map from $\mathcal V$ to the space of raviolo fields on $\mathcal V$. That is,
\begin{equation}
    Y(-,z):\CV\to \cF_{rav}(\CV)\,,
\end{equation}
such that for every $a\in \CV$ we have
\begin{equation}
    Y(a,z)=\sum_{m \geq 0} z^{m} a^+_m + \sum_{m\geq 0} \Omega_z^m a^-_m \,. 
\end{equation}
With all of these definitions at hand, we can introduce the main definition of a \emph{raviolo vertex algebra} following \cite{Garner:2023zqn}

\begin{dfn}
    A \textbf{raviolo vertex algebra} is the data $(\mathcal{V},|0\rangle,\partial,Y)$ where 
    \begin{itemize}
        \item $\CV = \oplus \CV^r$ is the space of states, which is $\mathbb{Z}$-graded
        \item $|0\rangle$ is the vacuum vector which is a distinguished element in $\CV^0$
        \item $\partial:\CV \to \CV$ is the translator operator, given by a degree $0$ endomorphism
        \item $Y:\CV \to \mathcal{F}_{rav}(\CV)$ is a linear map of degree $0$
    \end{itemize}
subject to the following axioms
\begin{enumerate}
    \item For every $a\in \CV$ the element $Y(a,z)$ is a raviolo field
    \item For every $a\in \CV$ we have
    \begin{equation*}
        [\partial,Y(a,z)]=\partial_zY(a,z)
    \end{equation*}
    \item The vacuum vector satisfies $\partial|0\rangle = 0$, $Y(|0,\rangle,z)=\mathbb{1}_{\CV}$, and for every $a\in \CV$ we have $Y(a,z)|0\rangle \in \mathcal{V}[\![z]\!]$; that is, $Y_-(a,z)|0\rangle = 0$. 
    \item For every $a,b \in \CV$ the fields $Y(a,z)$ and $Y(b,w)$ are mutually local. 
\end{enumerate}

\end{dfn}

The definition of a raviolo vertex algebra involves the graded vector space $\CV$, the state–operator correspondence, and the mutual locality of fields. These elements are sufficient to define the fields of the theory and to prescribe how they compose through their operator products. To capture the full physical structure of the theory, however, two further ingredients are required. The first is the vacuum state, representing the ground state of the theory, whose axioms combine the expected physical features of the vacuum with the conditions needed for mathematical consistency. The second is the translation operator $\partial$, which generates infinitesimal displacements along the complex direction. It allows one to translate the notion of differentiating fields into an algebraic operation on states. 

With these ingredients in place, one obtains a complete algebraic description of the local operators 
of a three-dimensional theory with the symmetry described in the previous sections. 
Much as in the two-dimensional case of vertex algebras, this structure makes it possible to relate 
representation theory to physical quantities, thereby extending many of the familiar applications 
of chiral vertex algebras to the raviolo setting.

In practice, the existence of a raviolo vertex algebra is most easily established by means of a reconstruction theorem, in direct analogy with the two-dimensional case \cite{Frenkel:1994em}. Such theorems provide an alternative set of conditions which guarantee the existence of a raviolo vertex algebra structure, but are often simpler to verify in concrete situations. More precisely, following \cite{Garner:2023zqn} we have

\begin{prop}
\label{prop:rectheorem}
Let $\mathcal V = \bigoplus \mathcal V^r$ be a $\mathbb Z$--graded vector space, 
$|0\rangle$ a non--zero vector, $\partial$ a degree $0$ endomorphism of $\mathcal V$. 
Further, let $\{a^i\}$ be a countable ordered set of vectors in $\mathcal V$, 
with $a^i \in \mathcal V^{r_i}$, and suppose we are given homogeneous fields
\begin{equation}
    A^i(z) = \sum_{m\geq 0} z^{m} A^{i,+}_m +  \Omega^m_z A^{i,-}_m
\end{equation}
of degree $r_i$, such that the following hold:
\begin{enumerate}
    \item $A^{i,+}_{0}|0\rangle = a^i$ and $A^{i,-}_m |0\rangle = 0$ for all $i$ and $m \geq 0$.
    \item $\partial |0\rangle = 0$ and $[\partial, A^i(z)] = \partial_z A^i(z)$ for all $i$.
    \item All fields $A^i(z)$ are mutually local.
    \item $\mathcal V$ is spanned by the vectors
    \begin{equation}
        A^{i_1}_{j_1} \cdots A^{i_l}_{j_l} |0\rangle, \qquad j_k \geq 0\,.
    \end{equation}
\end{enumerate}
Then the assignment
\begin{multline}
   Y\!\left(A^{i_1}_{j_1}\cdots A^{i_l}_{j_l}|0\rangle, z\right) \\
= \frac{1}{(j_1-1)! \cdots (j_l-1)!} 
:\partial_z^{-j_1-1} A^{i_1}(z) \cdots \partial_z^{-j_l-1} A^{i_l}(z): 
\end{multline}
determines a well-defined raviolo vertex algebra structure on $\mathcal V$. 
Moreover, this is the unique raviolo vertex algebra structure on $\mathcal V$ 
satisfying conditions (1)--(4) and such that $Y(a^i, z) = A^i(z)$.
\end{prop}

\subsection{Raviolo Current Vertex Algebra}

In this final section, we construct the raviolo current vertex algebra associated with the anti-chiral sector of the topological-holomorphic theory \eqref{3daction}. 
Our strategy is to leverage the canonical representation of $\widehat{\fG}_{\mathfrak{w}}^\vee$ on its universal envelope, then apply the reconstruction theorem (Prop. 4.0.1 of \cite{Garner:2023zqn}) introduced above. 

Let us begin by constructing the vacuum module $\mathcal{V}$. From a physicist’s perspective, the vacuum module is simply the Fock space — the vector space generated by arbitrary products of creation operators acting on the vacuum. Since our raviolo vertex algebra will be built from the affine graded Lie algebra $\widehat{\mathfrak{G}}_{\mathfrak{w}}^\vee$ constructed in \S \ref{sec:ceagla}, we must identify, among its generators, which act as creation and which as annihilation operators.

From point (1) of Proposition~\ref{prop:rectheorem}, the creation operators are the $A^{i,+}_n$, which in the raviolo field expansion come proportional to powers of $z$. The annihilation operators, on the other hand, are the $A^{i,-}_n$ which come proportional to the form-valued basis elements $\Omega_z^n$. Given the raviolo fields are homogeneous in cohomological degree, their components must pair consistently with $z$ and $\Omega_z$. This determines that the generators $\tilde{L}_n^a$ are paired with $z$ and thus act as creation operators, while the generators $\tilde{H}_n^a$ are paired with $\Omega_z$ and act as annihilation operators.

The vacuum module $\mathcal{V}$ is therefore spanned by states of the form  
\begin{equation}
    \tilde{L}_{j_1}^{a^1} \tilde{L}_{j_2}^{a_2} \cdots \tilde{L}_{j_k}^{a_k} |0\rangle\,, 
    \qquad j_i \geq 0,
\end{equation}
and such that  
\begin{equation}
    \tilde{H}_n^a |0\rangle = 0\,, 
    \qquad n \ge 0\,.
\end{equation}

To formalize this intuitive picture, one can describe the construction of $\mathcal{V}$ in purely algebraic terms.  
Consider the affine graded Lie algebra $\widehat{\mathfrak{G}}^\vee_{\mathfrak{w}}$, with the grading conventions \eqref{degconvention}, generated by the homogeneous basis elements $\ell^a_n = s^a \otimes \Omega_u^n $ in degree 1 and $\eta^a_n= t^a\otimes u^n$ in degree 0. Its \textit{universal enveloping graded algebra}, denoted $U\widehat{\mathfrak{G}}_{\mathfrak{w}}^\vee$, is defined as the free graded algebra generated by polynomials in $\ell_n^a$ and $\eta_n^a$, together with a central element $\mathfrak{w}$ of degree~1.  

The generators are ordered according to the graded Poincaré–Birkhoff–Witt theorem, and obey the affine Lie algebra relations 
\begin{gather*}
    \ell^a_n\ell^b_m - \ell^b_m\ell^a_n = 0\,, 
    \qquad 
    \eta^a_n\eta^b_m - \eta^b_m\eta^a_n = [\eta^a_n,\eta^b_m]\,,\\[4pt]
    \eta^a_n\ell^b_m - \ell^b_m\eta^a_n = [\eta^a_n,\ell^b_m]\,.
\end{gather*}
The homogeneous degree-0 unit is $(\ell^a_n)^0 = (\eta^a_m)^0 = 1$. By working under the identification \eqref{ec:identification},
\begin{equation}
\label{ec:identification2}
    \tL_n^a \mapsto \ell^a_n = s^a \otimes \Omega_u^n \,,\quad \tH_n^a \mapsto\eta^a_n= t^a\otimes u^n
\end{equation}
the mode generators $\tL^a_n,\tH^a_n$ act canonically on $U\widehat{\fG}_\mathfrak{w}^\vee$ by multiplication by $\ell^a_n,\eta^a_n$, respectively. We will use \eqref{ec:identification2} extensively in the following.

Next, consider the \textit{central factor} $\mathbb{C}_w = \mathbb{C} \oplus \mathbb{C}w[-1]$, where $w$ is an abstract Abelian generator of degree~1, and whose homogeneous degree-0 unit is $1 \oplus 0w = 1$.  
Taking the tensor product $U\widehat{\mathfrak{G}}^\vee_{\mathfrak{w}} \otimes \mathbb{C}_w$, the vacuum module $\mathcal{V}$ is obtained by quotienting out two relations:  
(i) all of the $\tH$-modes act trivially on $\mathcal{V}$, and  
(ii) the generator of the central piece $\mathbb{C}[-1] \subset \widehat{\mathfrak{G}}^\vee_{\mathfrak{w}}$ acts as multiplication by the constant $w$.\footnote{Note that both the values of the 2-cocycle $\mathfrak{w}$ and the scalar $w$ have cohomological degree~1.}  
Equivalently,
\begin{equation}
\label{ec:vacummodule}
   \mathcal{V} = U\widehat{\mathfrak{G}}^\vee_{\mathfrak{w}} \otimes_{U\widehat{\mathfrak{G}}^\vee_{\geq 0}} \mathbb{C}_w\,, 
\end{equation}
where $\widehat{\mathfrak{G}}^\vee_{\geq 0} = \mathrm{span}\{\tH_n^a\} \oplus \mathbb{C}w[-1]$ denotes the positive graded subalgebra.

The following equivalent mathematical definition should be familiar to experts in vertex algebras.
\begin{dfn}
    The \textbf{vacuum module} $\mathcal{V}=\operatorname{Ind}_{U\widehat\fG^\vee_{\geq 0}}^{U\widehat{\fG}^\vee_\mathfrak{w}}\C_w$ is the induced representation of the trivial $\widehat \fG^\vee_\mathfrak{w}$-representation on $\C_w\cong \C\oplus\C[-1]$.
\end{dfn}
\noindent In the following, we shall leverage \eqref{ec:identification2} in order to distinguish a \textit{vector} $\ell^a_n,\eta^a_n\in\widehat{\fG}^\vee_\mathfrak{w}\subset \mathcal{V}$, and its corresponding \textit{operator} $\tL^a_n,\tH^a_n$ on $\mathcal{V}$.


\begin{theorem}
    The conserved currents arising from the three-dimensional action \eqref{3daction} give rise to a raviolo vertex algebra structure.
\end{theorem}

\begin{proof}
Let us take the vector space $\mathcal{V}$ defined in \eqref{ec:vacummodule} and consider, for $1\leq a\leq \operatorname{dim}\mathfrak{n}^\vee=\operatorname{dim}V^\vee$, the raviolo field operators
\begin{equation}
    \mathcal{J}^a(z) = \sum_{n\geq 0} z^n \tL_n^a  + \Omega_z^n \tH^a_n
\end{equation}
which, with the degree conventions in \textit{Remark \ref{dualdegree}}, are homogeneous of degree $1$. We identify the unit $ |0\rangle=1\otimes 1\in\mathcal{V}$ as the vacuum. We will now prove points (1) through (4) of \textbf{Proposition} \ref{prop:rectheorem}.

By definition of $\mathcal{V}$, we have that
\begin{equation}
    \tH_n^a |0\rangle=0 \,,\quad  \tL_0^a|0\rangle = \ell_0^a\,,\label{vacmodulepolynomials}
\end{equation}
which proves point (1). Next, for the translation operator, we 
begin by defining an additional generator $D$ on $\widehat{\fG}^\vee_\mathfrak{w}\cong (\fG^\vee\otimes\mathcal{K}^\bullet_u)\oplus \C[-1]$ from the action of the derivative $\partial_u$ on the current modes:
\begin{equation*}
    [D,X\otimes k(u)] = X\otimes\partial_u k(u), \qquad [D,\mathfrak{w}]=0\,,
\end{equation*}
where $X\in\fG^\vee,~ k(u)\in\mathcal{K}_u^\bullet$ and $\mathfrak{w}\in\C[-1]$ is the degree 1 generator.  Under the canonical $\widehat{\fG}^\vee_\mathfrak{w}$-representation  on $\mathcal{V}$, this gives rise to an endomorphism $\partial\in\operatorname{End}\mathcal{V}$ which, by construction, satisfies
\begin{equation}
    [\partial,x]=\partial_z x\,,\qquad \partial|0\rangle = 0\label{translate}
\end{equation}
for all $x\in \widehat{\fG}^\vee_\frak{w}$. More explicitly,  these read as the following mode relations
\begin{equation}
\label{ec:actionpartial1}
    [\partial ,\tL_n^a] = (n+1)\tL_{n+1}^a\,,\quad [\partial, \tH_n^a] = -n \tH_{n-1}^a\,.
\end{equation}
This then allows us to compute
\begin{equation}
\label{ec:actionpartial2}
\begin{split}
    [\partial,\mathcal{J}^a(z)]
    &=\sum_{n\geq 0}z^n[\partial,\tL_n^a]+ \Omega_z^n[\partial,\tH_n^a] \\
&\stackrel{\eqref{ec:actionpartial1}}{=} \sum_{n\geq 0 }(n+1)z^n \tL_{n+1}^a - n\Omega^{n}_z\tH_{n-1}^a=\partial_z \mathcal{J}^a(z)\,,
\end{split} 
\end{equation}
identifying $\partial$ as the desired translation operator, thereby proving point (2). 

\medskip

To prove point (3), mutual locality, we will directly compute the commutator $[\mathcal{J}^a(z),\mathcal{J}^b(w)]$ between two raviolo fields and show that it is of the form in \textbf{Lemma \ref{ec:equivmutualloc}}. That is, we will show that 
\begin{equation}
\label{ec:point3aim}
    [\mathcal{J}^a(z),\mathcal{J}^b(w)]= (\bm{\kappa}^{-1})^{ab}\partial_w\Delta(z-w) + \mathbf{\tilde f}^{ab}_c\Delta(z-w)\mathcal{J}^c(w)\,.
\end{equation}
This computation involves a few grading-related technicalities, so we will present it in full detail.

To begin with, although we have been using the same Lie algebra index for both $\tL$  and $\tH$, in fact these fields take values in two distinct vector spaces. It will therefore be important to distinguish them in what follows. We thus introduce a collective index $a=(A,\alpha)$ so that
\begin{equation}
    \mathcal{J}^{a}(z)=\sum_{n\geq 0}z^n\tL^A_n + \Omega_z^n\tH^\alpha_n \,.
\end{equation}
Furthermore, recall from \eqref{degconvention} that $\tilde L_n^A$ has degree $1$ while $\tilde H_n^\alpha$ has degree $0$, and that $z^n$ has degree zero and $\Omega_z^n$ has degree $1$. As such, the raviolo field $\mathcal{J}^a(z)$ has homogeneous degree $1$ in $\mathcal{K}^\bullet_z \otimes \mathfrak{G}^\vee$. The commutator can thus be computed as
\begin{multline}
\label{ec:finalcom}
    [\CJ^a(a),\CJ^b(w)]=\sum_{n,m\geq 0} z^nw^m[\tL_n^A,\tL_m^B]-z^n\Omega^m_w[\tL_n^A,\tH_m^\beta]\\
    +\Omega_z^nw^m[\tH_n^\alpha,\tL_m^B]+\Omega_z^n\Omega_w^m[\tH_n^\alpha,H_m^\beta]\,,
\end{multline}
where a minus sign appeared in the second term from commuting $\tilde L_n^A$ past $\Omega_w^m$, both of which have odd degree $1$.\footnote{We thank Niklas Garner for pointing this out.}

At this point we must be careful when replacing the mixed commutators in \eqref{ec:finalcom} by their explicit expressions, which we recall here for convenience:
\begin{equation}
\label{ec:higherpossbrack4}
[\tH_n^\alpha,\tL_m^B]
=
\begin{cases}
(\tilde \mu_2)^{\alpha B}_C\,\tL_{m-n}^C
+ n(\kappa^{-1})^{\alpha B}\,\delta_{n-1,m}
& \text{if } m \ge n-1 \,,\\[0.3em]
0 & \text{otherwise} \,.
\end{cases}
\end{equation}
Recall that $(\tilde\mu_2)^{\alpha B}{}_C$ and $(\kappa^{-1})^{\alpha B}$ were defined in
\eqref{ec:aff2liealgbrack} and \eqref{ec:kilform}, respectively, as the structure constants of the dual maps
\begin{equation}
  \mu_2^\vee : V^\vee \times \fn^\vee \to \fn^\vee,
  \qquad
  \langle\,\cdot\,,\,\cdot\,\rangle^\vee : V^\vee \times \fn^\vee \to \C .
\end{equation}
Hence, in order to write the commutator
$[\tL_n^A,\tH_m^\beta]$ appearing in \eqref{ec:finalcom}, we introduce
the map
\begin{equation}
\nu_2^\vee : \fn^\vee \times V^\vee \to \fn^\vee \,,
\qquad
\nu_2^\vee(s^A,t^\beta)
= -\,\mu_2^\vee(t^\beta,s^A)\,,
\end{equation}
so that, in components,
\[
(\tilde \nu_2)^{A\beta}_C
= -(\tilde \mu_2)^{\beta A}_C \,.
\]
In complete analogy, we define
\begin{equation}
\eta^{-1} : \fn^\vee \times V^\vee \to \mathbb{C} \,,
\qquad
\eta^{-1}(s^A,t^\beta)
= \kappa^{-1}(t^\beta,s^A)\,,
\end{equation}
such that
\[
(\eta^{-1})^{A\beta}
= (\kappa^{-1})^{\beta A} \,.
\]
These definitions are chosen precisely so as to ensure the skew-symmetry of the bracket $[\tL_n^A,\tH_m^\beta]=-[\tH_m^\beta,\tL_n^A]$. With these conventions in place, we may rewrite \eqref{ec:finalcom} as
\begin{multline}
\label{ec:expparacomp}
        [\CJ^a(z),\CJ^b(w)]=\sum_{n,m\geq 0}-z^n\Omega_w^m\left[(\tilde{\nu}_2)^{A\beta}_C\tL_{n-m}^C + m(\eta^{-1})^{A\beta }\delta_{m-1,n}\right]\\
        +\Omega_z^nw^m\left[(\tilde{\mu}_2)^{\alpha B}_C\tL_{m-n}^C + n(\kappa^{-1})^{\alpha B}\delta_{n-1,m}\right] + \Omega_z^n\Omega_w^m \tilde{f}^{\alpha\beta}_\gamma \tH_{n+m}^\gamma\,.
\end{multline}

\medskip

Turning now to the right-hand side of equation \eqref{ec:point3aim}, we compute 
\begin{equation}
    \Delta(z-w)\mathcal{J}^c(w)=\sum_{n,m\geq 0}w^{n+m}\Omega_z^n\tL_m^C + \Omega_z^n\Omega_{w}^{m-n}\tH_m^\gamma -z^n\Omega_w^{n-m}\tL_m^C
\end{equation}
which is an element of 
\begin{equation}
\label{ec:dirsumdec1}
    (\mathcal{K}_w^0\otimes \mathcal{K}_z^1\otimes \fn^\vee)\oplus(\mathcal{K}_w^1\otimes \mathcal{K}_z^1\otimes V^\vee)\oplus(\mathcal{K}_w^1\otimes \mathcal{K}_z^0\otimes \fn^\vee)\,.
\end{equation}
We define the linear map $\mathbf{\tilde f}^{ab}_c$ acting on this space as the endomorphism:
\begin{equation}
    \mathbf{\tilde f}^{ab}_c=\iota_1 \circ (\tilde{\mu}_2)^{\alpha B}_C \circ p_1 + \iota_2 \circ \tilde f^{\alpha \beta}_\gamma\circ p_2 + \iota_3 \circ (\tilde{\nu}_2)^{A\beta}_C \circ p_3
\end{equation}
where $\iota_i,p_i$ with $i=1,2,3$ are inclusions and projections, respectively, of the direct sum decomposition \eqref{ec:dirsumdec1}. By construction, we then have 
\begin{multline}
\label{ec:granf1}
    \mathbf{\tilde f}^{ab}_c\Delta(z-w)\mathcal{J}^c(w) = \sum_{n,m\geq 0}w^{n+m}\Omega_z^n(\tilde{\mu}_2)^{\alpha B}_C\tL_m^C + \Omega_z^n\Omega_{w}^{m-n}f^{\alpha \beta}_\gamma\tH_m^\gamma \\
    -z^n\Omega_w^{n-m}(\tilde{\nu}_2)^{A\beta}_C\tL_m^C\,.
\end{multline}
In complete analogy, we have that
\begin{equation}
    \partial_w\Delta(z-w)=\sum_{n\geq 0} nw^{n-1}\Omega_z^n + (n+1)z^n\Omega_w^{n+1}\,,
\end{equation}
which can be thought as an element of 
\begin{equation}
\label{ec:dirsumdec2}
    (\CK_w^0\otimes \CK_z^1)\oplus(\CK_w^1\otimes \CK_z^0)\,,
\end{equation}
so that we may define the contraction with $(\bm{\kappa}^{-1})^{ab}$ as a linear endomorphism of the above space given by 
\begin{equation}
    (\bm{\kappa}^{-1})^{ab}=\iota_1\circ (\kappa^{-1})^{\alpha B}\circ p_1 - \iota_2\circ (\eta^{-1})^{A\beta}\circ p_2\,,
\end{equation}
where again, $\iota_j,p_j$ with $j=1,2$ are inclusions and projections, respectively, of the direct sum decomposition \eqref{ec:dirsumdec2}. By definition, we thus obtain
\begin{equation}
\label{ec:grank1}
    (\bm{\kappa}^{-1})^{ab}\partial_w\Delta(z-w)=\sum_{n\geq 0} n(\kappa^{-1})^{\alpha B}w^{n-1}\Omega_z^n - (n+1)(\eta^{-1})^{A\beta}z^n\Omega_w^{n+1}\,.
\end{equation}
Putting together equations \eqref{ec:granf1} and \eqref{ec:grank1}, and comparing with \eqref{ec:expparacomp} we finally have 
\begin{equation}
    [\mathcal{J}^a(z),\mathcal{J}^b(w)]= (\bm{\kappa}^{-1})^{ab}\partial_w\Delta(z-w) + \mathbf{\tilde f}^{ab}_c\Delta(z-w)\mathcal{J}^c(w)\,
    \label{commutator}
\end{equation}
as desired.

\medskip

Finally, point (4) follows immediately by construction of $\mathcal{V}$, and thus by the reconstruction \textbf{Theorem \ref{prop:rectheorem}}, $\mathcal{V}$ defines a unique raviolo vertex algebra structure.

\end{proof}

\begin{remark}
    The expression of \eqref{commutator} bears a striking resemblance to the commutator bracket obtained for the "raviolo current algebra" in \cite{Garner:2023zqn}. However, one crucial difference is that our pairing form $\kappa^{-1}$ has odd degree, which leads to a completely different presentation \eqref{vacmodulepolynomials} of the raviolo vacuum module $\mathcal{V}$. Moreover, this non-trivial degree within $\kappa$ is in fact closely related to the \textit{2-graded classical Yang-Baxter equations} \cite{Bai_2013,chen:2022,Cirio:2012be}, as well as the construction of higher-braiding structures \cite{Chen:2023tjf,Kemp_2025} --- both aspects relating back to higher-integrability (cf. \cite{Chen:2023integrable}).
\end{remark}

\medskip

\section*{Future Directions}

As mentioned in the introduction, a central motivation for studying infinite-dimensional symmetry algebras is that, in favourable situations, they can impose strong constraints on the dynamics of a theory. Having identified such an infinite-dimensional symmetry structure for the cohomology classes of the currents in our model, it is then important to understand whether it can play a comparable role in the present setting..

One standard way in which this happens in the two-dimensional case is through the representation theory of the vertex algebra. More precisely, the states of the theory are organized into modules of the chiral algebra, which provides a natural framework for analysing operators and their descendants by reducing the problem to the study of distinguished primary states from which larger families of operators are generated by the action of the symmetry modes. It is thus natural to ask whether a similar picture may hold in the present setting. In this direction, \cite{Garner:2023zqn} initiates the study of modules for raviolo vertex algebras and shows, in the affine case, that these are equivalent to modules of the associated current Lie algebra. In the present case, where the symmetry currents give rise to a centrally extended affine graded Lie algebra, one may similarly expect that its representations provide a natural organizing principle for the operators described by the cohomology of the currents. We leave a detailed study of this question for future work.

Another way in which infinite-dimensional symmetry algebras constrain two-dimensional theories is through conformal blocks. For operator insertions at marked points \(z_1,\dots,z_N\), conformal blocks are the chiral quantities attached to that configuration. As one varies the marked points, the conformal blocks vary accordingly, and in the affine case this dependence is constrained by the Ward identities of the current algebra, which give rise to the Knizhnik--Zamolodchikov equations governing their dependence on \(z_1,\dots,z_N\). In this direction, \cite{Alfonsi:2024qdr} introduces a raviolo configuration space adapted to the topological-holomorphic setting and uses it to define the corresponding spaces of coinvariants, whose duals provide the raviolo analogue of conformal blocks. They further show that the raviolo state-field map is recovered from the collision limits of these objects. This suggests that the current algebra constructed here may also act on such raviolo conformal blocks, and that the corresponding Ward identities could impose additional constraints on them, perhaps leading to an analogue of the Knizhnik–Zamolodchikov equations. It would be very interesting to explore this further.

Finally, we would like to observe that the analysis carried out here only captures the operator products at the quadratic order. In general, we expect the full quantum structure to also include additional higher-order products, arising from homotopy transfer;\footnote{Contrast this with the special case $\mu_1=0$; here, the Lie 2-algebra $\fG=H^\bullet(\fG)$ is its own cohomology and hence homotopy transfer need not be performed. However, this case corresponds to a cohomological vertex algebra, which has no non-trivial higher-order operator products \cite{Griffin:2025add}.} see Appendix \ref{sec:appendixproof}. Such techniques of extracting higher operator products in dimensions greater than two have been studied extensively, for example in the factorization-algebra framework of \cite{costello_gwilliam_2016}, in the operatope approach of \cite{Budzik:2022mpd,Gaiotto:2024gii}, and in recent models for holomorphic and topological-holomorphic field theories \cite{Felder:2025bsz,Costello:2020ndc,Zeng:2023qqp,Zeng:2021zef}. In the same spirit, the full higher vertex-algebraic quantization of 
the theory, and its relation to quantum higher integrability/Zamolodchikov's tetrahedron equations, will be studied elsewhere.


\subsection*{Acknowledgements}

Part of the completion of this project was supported by the China international talents Exchange Program (CEP program code JC202502007G) at the Beijing Institute of Mathematical Sciences and Applications. JL thanks Lewis Cole for valuable comments on the introduction, and Horacio Falomir for his support as PhD advisor. The work of JL is supported by CONICET. The work of HC is supported by the National Science Foundation of China (grant number W2533012).

\appendix

\section{Lie 2-Groups and Lie 2-Algebras}

\label{sec:appLie2alg}
Here we provide basic definitions of Lie 2-groups and Lie 2-algebras. They are part of the higher homotopy $L_\infty$-algebras generalizing Lie algebras \cite{Kim:2019owc,Baez:2004,Bai_2013}. 

\begin{dfn}
    A \textbf{Lie 2-group} $\mathbb{G} \rightrightarrows G$ is a Lie groupoid over a smooth manifold $G$, equipped with a smooth product functor $\otimes: \mathbb{G}\times \mathbb{G}\to \mathbb{G}$, such that all objects in $G$ are $\otimes$-invertible. 
\end{dfn}
We say that the Lie 2-group $\mathbb{G}$ is \textbf{strict} iff its invertible monoidal associator $\alpha: \otimes \circ (\otimes \times 1) \Rightarrow \otimes\circ (1\times\otimes)$ is the identity. In which case, its base $G$ is a Lie group.

There is an equivalent but convenient characterization for strict Lie 2-groups in terms of a pair of ordinary Lie groups. 
\begin{dfn}
\label{2grpdef}
A {\bf Lie group crossed-module} $\mathbb{G}=(\mathsf{H}\xrightarrow{\bar\mu_1}G,\rhd)$ consists of two (Lie) groups $\mathsf{H},G$, a (Lie) group homomorphism $\bar\mu_1:\mathsf{H}\rightarrow G$ and an (smooth) action $\rhd:G\rightarrow \operatorname{Aut}\mathsf{H}$ such that the following conditions
\begin{equation}
    \bar\mu_1(x\rhd y) = x\bar\mu_1(y)x^{-1}\,,\qquad (\bar\mu_1(y))\rhd y'=yy'y^{-1}\label{pfeif2}
\end{equation}
are satisfied for each $x\in G$ and $y,y'\in \mathsf{H}$.
\end{dfn}
\noindent Indeed, the category of strict Lie 2-groups is equivalent to the category of Lie group crossed-modules \cite{Porst2008Strict2A}.

\medskip

The infinitesimal approximation of a Lie 2-group gives rise to a Lie 2-algebra, which is a strict 2-term $L_\infty$-algebra \cite{Baez:2003fs}. 
\begin{dfn}\label{lie2alg}
Let $\mathbb{K}$ denote a field of characteristic zero (such as $\R$ or $\C$). A {\bf Lie 2-algebra} $\fG=(\fh\xrightarrow{\mu_1}\fg,\mu_2)$ over $\mathbb{K}$ consist of two Lie algebras $\big(\fh,[\cdot,\cdot]_\fh\big)$ and $\big(\fg,[\cdot,\cdot]\big)$ over $\mathbb{K}$ and the tuple of maps,
\begin{equation*}
    \mu_1: \fh\rightarrow\fg,\qquad \mu_2: \fg\wedge\fh \rightarrow \fh\, ,\footnotemark
\end{equation*}
\footnotetext{Here $\wedge$ denotes the skew-symmetric tensor product and $\odot$ denotes the symmetric tensor product.} subject to the following conditions for each $\sfx, \sfx',\sfx'' \in\fg$ and $\sfy,\sfy'\in\fh$:
\begin{enumerate}
    \item The $\fg$-equivariance of $\mu_1$ and the Peiffer identity,
    \begin{equation}
        \mu_1(\mu_2(\sfx,\sfy))=[\sfx,\mu_1(\sfy)]\,,\qquad \mu_2(\mu_1(\sfy),\sfy') =  [\sfy,\sfy']_\fh =- \mu_2(\mu_1(\sfy'),\sfy)\,.\label{pfeif1}
\end{equation}
\noindent Note $\operatorname{ker}\mu_1\subset \fh$ is an Abelian ideal due to the Peiffer identity \eqref{pfeif1}.

\item Graded Jacobi identities,
\begin{align}
\label{ec:gradjacobid}
        0&=[\sfx,[\sfx',\sfx'']]+[\sfx',[\sfx'',\sfx]]+[\sfx'',[\sfx,\sfx']], \\
        0&= \mu_2(\sfx,\mu_2(\sfx',\sfy)) - \mu_2(\sfx',\mu_2(\sfx,\sfy)) - \mu_2([\sfx,\sfx'],\sfy)\,.
    \end{align}
\end{enumerate}
Moreover, we call $\fG$ {\bf balanced} \cite{Zucchini:2021bnn} if it is equipped with a graded symmetric non-degenerate bilinear form $\langle \cdot,\cdot\rangle: \fg\odot\fh \rightarrow \mathbb{K}$ which is invariant
\begin{equation}
    \langle \sfx,\mu_2(\sfx',\sfy)\rangle = \langle [\sfx,\sfx'],\sfy\rangle\,,\qquad \langle \mu_1(\sfy),\sfy'\rangle = \langle \mu_1(\sfy'),\sfy\rangle\label{inv}
\end{equation}
for each $\sfx,\sfx'\in\fg$ and $\sfy,\sfy'\in \fh$.
\end{dfn}

There is a one-to-one correspondence between connected, simply-connected Lie 2-groups and Lie 2-algebras, given by integrating both components $\fh,\fg$ of the Lie 2-algebra $\fG$. Under this correspondence, the Lie 2-algebra differential $\mu_1$ is the derivative of the Lie group map $\bar\mu_1$.


\section{Conventions}

\label{sec:cohomologyconventions}

Throughout this article, we work extensively with various cohomology groups. To keep our conventions transparent and accessible, we summarize them here so that the reader can refer back as needed to verify definitions and computations.

We focus primarily on two complexes, the first of which is the Dolbeault complex on \( \mathbb{C} \setminus \{0\} \), organized as follows:
\begin{equation}
    \begin{tikzcd}
  & \Omega^{(0,0)}(\C\setminus \{0\})  \arrow[r, "\partial"] \arrow[d, "\bar \partial"] & \Omega^{(1,0)}(\C\setminus \{0\}) \arrow[r] \arrow[d, "\bar \partial"] &\cdots \\
   & \Omega^{(0,1)}(\C\setminus \{0\}) \arrow[r, "\partial"] & \Omega^{(1,1)}(\C\setminus \{0\})  \arrow[r] & \cdots
\end{tikzcd}
\end{equation}

In this context, we define the \( \partial \)-cohomology groups by
\begin{equation}
    H_\partial^{(p,q)}(\mathbb{C} \setminus \{0\}) 
    = \frac{ \{ \text{\( \partial \)-closed } (p,q) \text{-forms} \} }
           { \{ \text{\( \partial \)-exact } (p,q) \text{-forms} \} }\,,
\end{equation}
and analogously for the $\bar \partial$-cohomology, by replacing $\partial$ with $\bar \partial$ in the definition above.

The second complex we will frequently use is the \emph{raviolo complex}. Given a three-manifold $M$ equipped with a transverse holomorphic foliation, the raviolo complex is defined as follows:
\begin{equation}
     \begin{tikzcd}
  & \cA^{0,0}(M)  \arrow[r, "\dr'"] \arrow[d, " \partial"] & \cA^{1,0}(M) \arrow[r] \arrow[d, "\partial"] &\cdots \\
   & \cA^{0,1}(M) \arrow[r, "\dr'"] & \cA^{1,1}(M)  \arrow[r] & \cdots
\end{tikzcd}
\end{equation}
In this context, we define the \( \dr' \)-cohomology groups by
\begin{equation}
    H_{\dr'}^{(p,q)}(M) 
    = \frac{ \{ \text{\( \dr' \)-closed } (p,q) \text{-forms} \} }
           { \{ \text{\( \dr' \)-exact } (p,q) \text{-forms} \} }\,\,,
\end{equation}
and analogously for the $ \partial$-cohomology, by replacing $\dr'$ with $ \partial$ in the definition above.

The same notational conventions apply to all other complexes appearing throughout the text.

\section{Cohomology of Lie 2-algebras}

\label{sec:appendixproof}

In the present case, the currents and transformation parameters are valued in the cohomology \( H^{\bullet}(\mathfrak{G}) = V \oplus \mathfrak{n} \), where, recall,
\begin{equation}
    V = \ker(\mu_1) \subset \mathfrak{h}\,, \qquad \mathfrak{n} = \mathfrak{g} / \operatorname{im}(\mu_1) \,.
\end{equation}
To write expressions like \( \langle \tilde{\alpha}, \tilde{H} \rangle \) or \( [\tilde{\alpha}, \tilde{\alpha}'] \), as in the two-dimensional case, it is necessary to verify that these operations descend to cohomology. We include a proof of this fact for completeness.

We begin with the natural projection and inclusion maps
\begin{equation*}
    p = (p_{-1}, p_0): \mathfrak{G} \to H^\bullet(\mathfrak{G}), \qquad 
    \iota = (\iota_{-1}, \iota_0): H^\bullet(\mathfrak{G}) \to \mathfrak{G},
\end{equation*}
which satisfy $p\circ\iota=\id_{H^\bullet(\fG)}$. The equation $\iota\circ p \simeq \id_{\fG}$, on the other hand, only holds \textit{up to homotopy}~\cite{Stasheff:1963} (see also~\cite{Alfonsi_2023}): there exists a linear map \( h: \mathfrak{g} \to \mathfrak{h} \), called the \emph{chain homotopy}, which satisfies
\begin{equation}
    h \circ \mu_1 = \iota_{-1} \circ p_{-1} - \mathrm{id}_{\mathfrak{h}}\,, \qquad 
    \mu_1 \circ h = \iota_0 \circ p_0 - \mathrm{id}_{\mathfrak{g}}\,.
\end{equation}
The existence of such a chain homotopy gives rise to higher-order bracket operations on the cohomology ---  this is the essence of the \textbf{homotopy transfer theorem} \cite{Baez:2003fs,stasheff2018,Arvanitakis:2020rrk}.

\begin{prop} 
\label{prop:nisanalgebra}
The following properties hold
    \begin{enumerate}
        \item The induced graded brackets on $V\oplus \fn$ satisfy the graded Jacobi identity.
        \item The non-degenerate bilinear form $\langle \cdot,\cdot\rangle : \fg \times \fh \to \C$ induces a non-degenerate bilinear form $\langle\cdot,\cdot\rangle:\fn \times V \to \C$.
    \end{enumerate}
Note that the first property guarantees that $\fn$ inherits a Lie bracket from $\fg$, and that $\mu_2$ induces a well-defined derivation on $V$.
\end{prop}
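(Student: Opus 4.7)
The plan is to derive both assertions directly from the compatibility of the Lie $2$-algebra structure with $\mu_1$, together with a standard duality argument for the pairing. The two parts are largely independent and can be handled in parallel.

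For part (1), the first step is to check that $\mu_2$ descends to cohomology. The key ingredient is the graded Leibniz rule for $\mu_1$ with respect to $\mu_2$,
\begin{equation*}
\mu_1(\mu_2(x,y)) = \mu_2(\mu_1 x,\, y) + (-1)^{|x|}\mu_2(x,\, \mu_1 y)\,,
\end{equation*}
which is part of the defining data of a Lie $2$-algebra. From this one reads off that closed-times-closed is closed and closed-times-exact is exact, so $\mu_2$ induces well-defined brackets on $H^\bullet(\fG) = V \oplus \fn$ (a Lie bracket on $\fn$ and a compatible $\fn$-action on $V$, as the remark after the proposition records). For the graded Jacobi identity, I would appeal to the $L_\infty$-Jacobiator relation: the failure of strict Jacobi at the chain level is $\mu_1$-exact---or vanishes identically if $\fG$ is strict, as the notation $(\fg \xrightarrow{\mu_1} \fh, \mu_2)$ suggests---and therefore represents the zero class in $H^\bullet(\fG)$.

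For part (2), the invariance of the pairing reads schematically $\langle \mu_1 x, y\rangle \pm \langle x, \mu_1 y\rangle = 0$. Specialising to $y \in V = \ker \mu_1$ yields $\langle \mu_1 x, y\rangle = 0$ for every $x$, so $\operatorname{im} \mu_1$ pairs trivially with $V$. Hence the pairing descends unambiguously to a map $\fn \times V \to \C$. For non-degeneracy, the plan is to use duality: the non-degenerate pairing identifies one factor of $\fG$ with the linear dual of the other, and invariance implies that $\mu_1$ agrees (up to a sign) with its own transpose under this identification. The orthogonal-complement formula $(\ker T)^{\perp} = \operatorname{im}(T^{*})$ then forces the annihilator of $V = \ker \mu_1$ to coincide with $\operatorname{im} \mu_1$, which is exactly the statement that the induced pairing between $\ker \mu_1$ and $\operatorname{coker} \mu_1$ is perfect.

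The main obstacle I anticipate is purely bookkeeping: carefully fixing the signs in the graded Leibniz and invariance identities once the cohomological degrees of $\fg$ and $\fh$ are pinned down (the paper's notation leaves a small ambiguity here, since $\fn = \fg/\operatorname{im}(\mu_1)$ is declared to live in $H^0$). No substantive new input is required beyond the defining compatibilities of $\fG$ and its invariant pairing---both assertions reduce to standard homological-algebra manipulations on the two-term chain complex underlying $\fG$.
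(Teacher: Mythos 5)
Your proof is correct, but for part (1) it takes a genuinely different route from the paper. The paper works with explicit transferred structures: it introduces the projection/inclusion maps $p,\iota$ and a chain homotopy $h$ with $h\circ\mu_1=\iota_{-1}\circ p_{-1}-\mathrm{id}_\fh$, defines the induced operations as $p\circ[\cdot,\cdot]\circ(\iota\otimes\iota)$ and $p\circ\mu_2\circ(\iota\otimes\iota)$, and then verifies the graded Jacobi identity by a direct computation in which the $h$-correction terms are shown to vanish using $\mu_1\mu_2(\cdot,\cdot)=[\cdot,\mu_1\cdot]$ together with the definitions of $\iota_{-1}$ and $p_0$. You instead argue at the level of descent: the Leibniz compatibility makes $\mu_2$ a chain map, so it induces well-defined brackets on $H^\bullet(\fG)$ (the key point being that products landing in degree $-2$ vanish, which is exactly why exact-times-closed is exact), and the Jacobiator is $\mu_1$-exact (identically zero in the strict case), so Jacobi holds on cohomology. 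Your argument is shorter and more conceptual; the paper's buys explicit formulas for the induced structure constants, which it needs downstream when writing the mode algebra in a basis. For part (2) the well-definedness argument is identical to the paper's, and your non-degeneracy argument via $(\ker\mu_1)^{\perp}=\operatorname{im}\mu_1$ is essentially the paper's rank--nullity count in a more structural guise (and is, if anything, slightly more complete, since the paper's dimension count does not explicitly rule out a nontrivial radical). The only caveat is the one you already flag: the signs and the direction of $\mu_1$ (the paper's later usage makes clear that $\mu_1:\fh\to\fg$ with $\fh$ in degree $-1$) need to be fixed before the Leibniz bookkeeping is carried out, but this does not affect the validity of the argument.
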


\begin{proof}

To prove (1) we can use this chain homotopy to show that the canonically induced Lie bracket on $\fn$ and the induced action on $V$ 
\begin{equation}
    [\cdot,\cdot]'=p_0\circ [\cdot,\cdot]\circ(\iota_0\otimes\iota_0) \qquad \mu_2' = p_{-1}\circ \mu_2\circ (\iota_0\otimes\iota_{-1})
\end{equation}
satisfy the graded Jacobi identity.

Indeed, to see this, we compute for each $\mathsf{x},\mathsf{x}'\in\fn$ and $\mathsf{y}\in V$ that
\begin{equation*}
\begin{split}
    \mu_2'(\mathsf{x},\mu_2'(\mathsf{x}',\mathsf{y})) &= p_{-1}\Big(\mu_2\big(\iota_0\mathsf{x},(\iota_{-1}p_{-1})\mu_2(\iota_0\mathsf{x}',\iota_{-1}\mathsf{y})\big)\Big)\\
    &=p_{-1}\Big(\mu_2\big(\iota_0\mathsf{x},h\mu_1\mu_2(\iota_0\mathsf{x}',\iota_{-1}\mathsf{y})\big)\Big) + p_{-1}\Big(\mu_2\big(\iota_0\mathsf{x},\mu_2(\iota_0\mathsf{x}',\iota_{-1}\mathsf{y})\big)\Big)\\
    &=p_{-1}\Big(\mu_2\big(\iota_0\mathsf{x},h[\iota_0\mathsf{x}',\mu_1\iota_{-1}\mathsf{y}]\big)\Big) + p_{-1}\Big(\mu_2\big(\iota_0\mathsf{x},\mu_2(\iota_0\mathsf{x}',\iota_{-1}\mathsf{y})\big)\Big)\\
    &= 0 + p_{-1}\Big(\mu_2\big(\iota_0\mathsf{x},\mu_2(\iota_0\mathsf{x}',\iota_{-1}\mathsf{y})\big)\Big) \,,
\end{split}
\end{equation*}
where we have used the identity $\mu_1\mu_2(\cdot,\cdot) = [\cdot,\mu_1 \cdot]$, and the first term vanishes in the last line due to the definition of $\iota_{-1}: V=\operatorname{ker}(\mu_1)\hookrightarrow \fh$.

Similarly, we compute for $\mathsf{x}_1,\mathsf{x}_2,\mathsf{x}_3\in\fn$,
\begin{equation*}
\begin{split}
    [\mathsf{x}_1,[\mathsf{x}_2,\mathsf{x}_3]']' &= p_0\Big([\iota_0\mathsf{x}_1,\iota_0p_0\big([\iota_0\mathsf{x}_2,\iota_0\mathsf{x}_3]\big)]\Big) \\ 
    &= p_0\Big([\iota_0\mathsf{x}_1,\mu_1h\big([\iota_0\mathsf{x}_2,\iota_0\mathsf{x}_3]\big)]\Big) - p_0\Big([\iota_0\mathsf{x}_1,\big([\iota_0\mathsf{x}_2,\iota_0\mathsf{x}_3]\big)]\Big) \\
    &= p_0\Big(\mu_1\big(\mu_2(\iota_0\mathsf{x}_1,h[\iota_0\mathsf{x}_2,\iota_0\mathsf{x}_3])\big)\Big) - p_0\Big([\iota_0\mathsf{x}_1,\big([\iota_0\mathsf{x}_2,\iota_0\mathsf{x}_3]\big)]\Big) \\ 
    &= 0 - p_0\Big([\iota_0\mathsf{x}_1,\big([\iota_0\mathsf{x}_2,\iota_0\mathsf{x}_3]\big)]\Big)\,,
\end{split}
\end{equation*}
where the first term vanishes in the last line due to the definition $p_0: \fg\rightarrow \fn=\fg/\operatorname{im}(\mu_1)$. By summing over cyclic permutations, the remaining quantities in both of the above equations vanish by the graded Jacobi identity satisfied by $\fG$. Thus, property (1) holds.

To prove (2), we must show that the induced bilinear form \(\langle \cdot,\cdot \rangle\) is both well defined and non-degenerate. First, we verify that it is independent of the choice of representatives. Let \(\mathsf{x}, \mathsf{x}' \in \mathfrak{g}\) be representatives of the same class in \(\mathfrak{n} = \mathfrak{g} / \mathrm{im}(\mu_1)\), so that \(\mathsf{x}' = \mathsf{x} + \mu_1(\theta)\) for some \(\theta \in \mathfrak{h}\). Then, for any \(\mathsf{y} \in V = \ker(\mu_1)\), we have
\begin{equation}
    \langle \mathsf{x}', \mathsf{y} \rangle 
    = \langle \mathsf{x} + \mu_1(\theta), \mathsf{y} \rangle 
    = \langle \mathsf{x}, \mathsf{y} \rangle + \langle \mu_1(\theta), \mathsf{y} \rangle\,.
\end{equation}
The second term vanishes due to the compatibility of the bilinear form with \(\mu_1\), namely,
\begin{equation}
    \langle \mu_1(\theta), \mathsf{y} \rangle = \langle \theta, \mu_1(\mathsf{y}) \rangle = 0\,,
\end{equation}
since \(\mathsf{y} \in \ker(\mu_1)\). Therefore, the pairing is well defined on cohomology classes.

To show that the induced form is non-degenerate, we apply the rank-nullity theorem. Since \(\mu_1 : \mathfrak{h} \to \mathfrak{g}\) is linear, we have:
\begin{align}
    \dim \ker(\mu_1) + \dim \mathrm{im}(\mu_1) &= \dim \mathfrak{h}\,, \\
    \dim \mathrm{coker}(\mu_1) + \dim \mathrm{im}(\mu_1) &= \dim \mathfrak{g}\,.
\end{align}
Since the original bilinear form is invertible we have that $\dim \fh =\dim \fg$ so that subtracting these equations gives
\begin{equation}
   \dim V = \dim \ker(\mu_1) = \dim \mathrm{coker}(\mu_1)=\dim \mathfrak{n}\,,
\end{equation}
which implies that the restriction of $\langle \cdot,\cdot\rangle$ to $V\oplus \fn$ is non-degenerate.

\end{proof}

\printbibliography

\end{document}